\documentclass[letterpaper,10pt,conference]{ieeeconf}

\IEEEoverridecommandlockouts  

\usepackage{amsmath}
\usepackage{amssymb}
\usepackage{amsthm}
\usepackage{mathtools}
\usepackage{derivative}
\usepackage{xcolor}
\usepackage{bm}

\usepackage[noadjust]{cite}
\usepackage{url}

\usepackage{graphicx}
\usepackage[export]{adjustbox} 
\graphicspath{{figures/}}

\newtheorem{theorem}{Theorem}

\newtheorem{proposition}{Proposition}
\newtheorem{corollary}{Corollary}

\theoremstyle{definition}
\newtheorem{definition}{Definition}

\newtheorem{remark}{Remark}

\newtheorem{example}{Example}

\newcommand{\R}{\mathbb{R}}

\newcommand{\rank}{\operatorname{rank}}

\renewcommand{\bf}{\mathbf{f}} 

\usepackage{xcolor}
\definecolor{myblue}{RGB}{49, 114, 174}
\definecolor{myred}{rgb}{0.796, 0.235, 0.2}
\definecolor{mygreen}{rgb}{0.22, 0.596, 0.149}
\definecolor{mypurple}{rgb}{0.584,0.345,0.698}

\usepackage{blindtext}

\usepackage{comment}

\usepackage{cite}
\usepackage{amsfonts}

\usepackage{algorithmic}
\usepackage{textcomp}

\newcommand{\Ic}{\mathcal{I}}

\title{\textbf{
Stability Analysis in Multi-Constraint  Safety Filters for Linear Systems
}}

\author{Shima Sadat Mousavi, Pol Mestres, and Aaron D. Ames %
\thanks{The authors are with the Department of Mechanical and Civil Engineering, California Institute of Technology, Pasadena, CA \texttt{\{smousavi,mestres,ames\}@caltech.edu}.}%
\thanks{This research was supported by the Boeing Strategic University Initiative.}%
}

\begin{document}

\bstctlcite{IEEEexample:BSTcontrol}

\maketitle
\thispagestyle{empty}
\pagestyle{empty}


\begin{abstract}
Multi-constraint safety filters based on control barrier functions  for linear systems with affine state constraints yield continuous piecewise-affine closed-loop dynamics and may introduce boundary equilibria and unstable active-set modes. Although they guarantee forward invariance, they can change nominal stability, and it remains unclear when unstable modes cause divergence versus bounded, convergent behavior. This paper develops a geometric framework to separate these cases: leveraging explicit active-set realizations, we show that equilibria associated with nonempty active sets lie on the corresponding constraint faces and that any unstable directions are tangent to those faces due to exponential enforcement of the active constraints. We characterize mode stability via a minimum-phase test, certify divergence under fixed active sets using recession cones, and  derive tractable linear-matrix-inequality conditions for global exponential stability or boundedness using Lyapunov and LaSalle arguments.
\end{abstract}

\section{Introduction}
Safety-critical control requires feedback laws that achieve performance objectives while enforcing hard constraints, typically formalized as forward invariance of a safe set. Control barrier functions (CBFs) enable a practical ``safety filter'' architecture: a nominal stabilizing controller is minimally modified by solving a quadratic program (QP) that enforces CBF inequalities and guarantees invariance \cite{ames2014controlbarrier,ames2017cbfqp,xu2015robustcbf}. This separation of nominal performance design from safety enforcement is attractive in practice, but it also complicates analysis: the safety filter can alter stability and asymptotic behavior relative to the nominal closed loop, especially when multiple constraints are enforced simultaneously. This paper studies the following question for linear systems with affine state constraints: when can multi-constraint CBF-QP safety filters introduce boundary equilibria, unstable active-set modes, or unbounded trajectories, and when do trajectories remain bounded and converge to the desired equilibrium?

A growing literature has documented that CBF-QP safety filters may exhibit undesired dynamical behaviors beyond constraint satisfaction, including spurious equilibria, limit cycles, and divergence \cite{XT-DVD:21,MFR-APA-PT:21,PM-YC-EDA-JC:25-jnls}, but it offers limited structural explanation of when local instability leads to genuine divergence versus bounded long-term behavior.
This gap is particularly evident even for linear systems with affine constraints. For the single-constraint case, a sharp characterization is available because the filtered closed loop reduces to a two-mode piecewise-affine system \cite{PM-SSM-ADA:26,mousavi2026stability}, but these conclusions do not extend directly to multiple constraints, where many active sets can arise and trajectories may switch across faces of the safe polyhedron. Related multi-constraint works have addressed feasibility and compatibility of simultaneous CBF inequalities \cite{mousavi2025vertices,mousavi2026structure}, but they do not address stability, equilibria, or divergence of the resulting closed-loop system.

In the multi-constraint regime, explicit and parametric QP results \cite{bemporad2002explicit,tondel2003explicit,borrelli2017mpc} imply a continuous piecewise-affine closed loop under standard regularity assumptions. In particular, for linear dynamics with affine constraints, the optimizer is piecewise affine and the state space admits a polyhedral active-set partition. General stability tools for piecewise-affine and switched systems provide useful sufficient conditions \cite{branicky1998multiple,liberzon2003switching,pavlov2007convergence}, but they do not provide a tailored dynamical understanding of multi-constraint CBF-QP safety filters, even though such filters are often used as modular wrappers and can reshape nominal stability and equilibria. This gap matters for reliable deployment because invariance alone does not preclude instability or undesirable asymptotic behavior.

This paper fills this gap via a geometric and invariant-set analysis that leverages explicit active-set realizations of multi-constraint CBF-QPs \cite{mestres2025explicit}. We first provide a spectral characterization of active-set modes and establish a minimum-phase interpretation of mode stability (Thm.~\ref{thm:det_factorization_multi}, Cor.~\ref{cor:eigs_minphase_multi}). We then characterize the equilibrium structure of the filtered closed loop, showing that equilibria associated with nonempty active sets lie on constraint boundaries (Prop.~\ref{prop:equilibria_on_faces}) and that instability under a fixed active set can only develop tangentially to the active constraints (Prop.~\ref{prop:tangency_active_modes}, Cor.~\ref{cor:tangency_unstable_directions}). Next, we provide a geometric certificate that distinguishes unstable modes that generate unbounded trajectories from those whose instability is suppressed by the polyhedral structure (Prop.~\ref{prop:persistent_unstable_ray}). Finally, we derive tractable Lyapunov/LaSalle linear-matrix-inequality (LMI) conditions that certify global exponential stability or boundedness on the safe set (Thm.~\ref{thm:ges_cqlf}, Prop.~\ref{prop:lmi_lasalle}).

\section{Problem Setup and Closed-Loop Structure}

\textbf{Notation.}
The sets of real numbers and $n$-dimensional Euclidean space are denoted by $\mathbb R$ and $\mathbb R^n$.
The interior and closure of a set $\mathcal S\subset\mathbb R^n$ are denoted by $\operatorname{int}(\mathcal S)$ and $\operatorname{cl}(\mathcal S)$.
For vectors $x,y\in\mathbb R^n$, the inequalities $x\ge y$ and $x>y$ are understood componentwise.
For symmetric matrices, positive definiteness and semidefiniteness are denoted by $P\succ 0$ and $P\succeq 0$.
The identity matrix of dimension $n$ is denoted by $I_n$, and by $I$ when the dimension is clear from context. For a matrix $G\succ0$, the weighted norm is defined as
$\|z\|_G^2 := z^\top G z$. For scalars $a_1,\dots,a_k$, $\mathrm{diag}(a_1,\dots,a_k)$ denotes the diagonal matrix with diagonal entries $a_1,\dots,a_k$.

\subsection{Linear System, Safety Constraints, and Safety Filter}

We consider the linear control system
\begin{equation}
\dot{x} = A x + B u,
\label{eq:system}
\end{equation}
where $x \in \mathbb{R}^n$ is the state and $u \in \mathbb{R}^m$ is the control
input. A nominal state-feedback controller of the form
\begin{equation}
u = -K x
\label{eq:nominal}
\end{equation}
is assumed to be given, with $A - B K$ Hurwitz.

Safety is specified by $p$ affine state constraints
\begin{equation}
h_i(x) := c_i^\top x + d_i \ge 0,
\quad i = 1,\dots,p,
\label{eq:constraints}
\end{equation}
where $c_i \in \mathbb{R}^n$ and $d_i > 0$. The corresponding safe set is
\begin{equation}
\mathcal{C} :=
\bigcap_{i=1}^p \{ x \in \mathbb{R}^n : h_i(x) \ge 0 \}.
\label{eq:safeset}
\end{equation}
The condition $d_i > 0$ for all $i\in[p]$ ensures that the origin lies in the interior of
$\mathcal{C}$. Throughout the paper, we restrict attention to first-order
(relative-degree-one) safety constraints, so that each constraint enters the
dynamics affinely through the control input.

To enforce safety, we employ a CBF-based safety filter defined as the solution of the following QP:
\begin{equation}
\begin{aligned}
u^\star(x) = \arg\min_{u \in \mathbb{R}^m} \;&
\frac{1}{2}\|u + K x\|_G^2 \\
\text{s.t.} \;&
c_i^\top (A x + B u)
\ge -\alpha (c_i^\top x + d_i),
\\
& i = 1,\dots,p,
\end{aligned}
\label{eq:cbf_qp}
\end{equation}
where $G \in \mathbb{R}^{m \times m}$ is symmetric positive definite and
$\alpha > 0$ is a fixed scalar parameter.

Throughout the paper, we assume that the quadratic program \eqref{eq:cbf_qp}
is feasible for all $x \in \mathcal{C}$ and that the resulting closed-loop
system $\dot{x} = A x + B u^\star(x)$ is well defined on $\mathcal{C}$.
Under these assumptions, the safety filter renders $\mathcal{C}$ forward
invariant and induces continuous piecewise-affine closed-loop dynamics.

\subsection{Piecewise-Affine Closed-Loop Representation}
\label{subsec:pwa_representation}

We now describe the structure of the closed-loop dynamics induced by the
safety filter \eqref{eq:cbf_qp}. Although the control input is defined
implicitly through a quadratic program, the resulting closed-loop system
admits an explicit representation as a continuous piecewise-affine system
on the safe set $\mathcal{C}$, which will be the basis for all subsequent
analysis.

For any $x \in \mathcal{C}$, let $\mathcal{I}(x) \subseteq \{1,\dots,p\}$
denote the set of indices of constraints that are active at the optimizer of
\eqref{eq:cbf_qp}. For each $x\in\mathcal C$, \eqref{eq:cbf_qp} has a unique optimizer $u^\star(x)$.
Assume the active constraints have linearly independent $u$-gradients, i.e.,
$\rank(B^\top C_{\mathcal I(x)})=|\mathcal I(x)|$ for all $x\in\mathcal C$; hence $\mathcal I(x)$ is well defined and the active-set regions form a polyhedral partition of $\mathcal C$. This induces a partition of $\mathcal{C}$ into regions
indexed by active sets,
\begin{equation}
\mathcal{R}_{\mathcal{I}}
:= \{ x \in \mathcal{C} : \mathcal{I}(x) = \mathcal{I} \},
\label{eq:regions}
\end{equation}
where $\mathcal{I}$ ranges over all active sets that occur on $\mathcal{C}$.
Under standard regularity conditions for strictly convex quadratic programs
with affine constraints, these regions are polyhedral and form a finite
partition of $\mathcal{C}$. For each active set $\mathcal I$, define the closed cell
\(
\overline{\mathcal R}_{\mathcal I}:=\operatorname{cl}(\mathcal R_{\mathcal I}).
\)
In the affine-data setting, $\overline{\mathcal R}_{\mathcal I}$ is obtained by
replacing strict inequalities in the KKT region characterization by non-strict
ones.

On each region $\mathcal{R}_{\mathcal{I}}$, the optimizer $u^\star(x)$ is
affine in $x$, and the closed-loop dynamics take the affine form
\begin{equation}
\dot{x} = \tilde{A}_{\mathcal{I}} x + \tilde{b}_{\mathcal{I}},
\qquad x \in \mathcal{R}_{\mathcal{I}},
\label{eq:pwa_dynamics}
\end{equation}
for matrices $\tilde{A}_{\mathcal{I}} \in \mathbb{R}^{n\times n}$ and vectors
$\tilde{b}_{\mathcal{I}} \in \mathbb{R}^n$ that depend on the active set
$\mathcal{I}$. When no constraint is active,
$\mathcal{I} = \emptyset$, the safety filter coincides with the nominal
controller, and
\begin{equation}
A_0 :=\tilde{A}_{\emptyset} = A - B K,
\qquad
\tilde{b}_{\emptyset} = 0.
\label{eq:nominal_mode}
\end{equation}
We refer to the affine dynamics associated with any fixed active set
$\mathcal{I}$ as a \emph{mode} of the closed-loop system.

We next give explicit expressions for the matrices in \eqref{eq:pwa_dynamics}. For each index set $\mathcal I\subseteq\{1,\dots,p\}$, let $C_{\mathcal I}$ denote the matrix whose columns are the vectors $\{c_i\}_{i\in\mathcal I}$, and let $d_{\mathcal I}$ denote the vector collecting the corresponding scalars $\{d_i\}_{i\in\mathcal I}$:
\begin{equation}
C_{\mathcal I}:=\begin{bmatrix} c_i \end{bmatrix}_{i\in\mathcal I}
\in\mathbb R^{n\times|\mathcal I|},
\qquad
d_{\mathcal I}:=\begin{bmatrix} d_i \end{bmatrix}_{i\in\mathcal I}
\in\mathbb R^{|\mathcal I|}.
\label{eq:CI_dI_def}
\end{equation}
Let
\begin{equation}
S_{\mathcal{I}}
:= C_{\mathcal{I}}^\top B G^{-1} B^\top C_{\mathcal{I}}
\in \mathbb{R}^{|\mathcal{I}|\times|\mathcal{I}|},
\label{eq:S_I}
\end{equation}
and assume $S_{\mathcal I}\succ0$ (equivalently, $\rank(B^\top C_{\mathcal I})=|\mathcal I|$). Define
\begin{equation}
\Pi_{\mathcal{I}}
:= B G^{-1} B^\top C_{\mathcal{I}}\, S_{\mathcal{I}}^{-1}
\in \mathbb{R}^{n\times|\mathcal{I}|}.
\label{eq:P_I}
\end{equation}
Then, on $\mathcal{R}_{\mathcal{I}}$, the closed-loop matrices are given by
\begin{align}
\tilde A_{\mathcal I}&=A_0-\Pi_{\mathcal I}C_{\mathcal I}^\top(A_0+\alpha I),
\label{eq:Atilde_I_explicit}\\
\tilde{b}_{\mathcal{I}}
&= -\alpha\, \Pi_{\mathcal{I}}\, d_{\mathcal{I}}.
\label{eq:btilde_I_explicit}
\end{align}

\begin{remark}
Equations \eqref{eq:Atilde_I_explicit}--\eqref{eq:btilde_I_explicit} follow
from the KKT conditions of \eqref{eq:cbf_qp} by treating the constraints in
$\mathcal{I}$ as equalities and solving for the corresponding Lagrange
multipliers; see, e.g.,~\cite{mestres2025explicit}.
\end{remark}

\subsection{Spectral Structure of Active-Set Modes}
\label{subsec:spectral_structure}

This subsection characterizes the spectrum of each affine mode matrix
$\tilde A_{\mathcal I}$ in \eqref{eq:Atilde_I_explicit} in terms of the nominal
closed-loop matrix $A_0:=A-BK$ and a square transfer matrix associated with the
active constraints. This result will be used to interpret mode stability as a
minimum-phase property. Define
\begin{equation}\label{eq:H_I_def}
F_{\mathcal I}(\lambda)
:= C_{\mathcal I}^\top(\lambda I-A_0)^{-1}BG^{-1}B^\top C_{\mathcal I}.
\end{equation}

\begin{theorem}
\label{thm:det_factorization_multi}
Let $\mathcal I\neq\emptyset$ and suppose $S_{\mathcal I}$ in \eqref{eq:S_I} is nonsingular.
For any $\lambda\in\mathbb C$ such that $\lambda I-A_0$ is invertible,
\begin{equation}\label{eq:det_factorization_multi}
\det(\lambda I-\tilde A_{\mathcal I})
=
\det(\lambda I-A_0)\;
\frac{(\lambda+\alpha)^{|\mathcal I|}}{\det(S_{\mathcal I})}\;
\det\!\big(F_{\mathcal I}(\lambda)\big).
\end{equation}
\end{theorem}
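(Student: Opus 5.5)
The plan is a direct determinant computation: factor $\lambda I-A_0$ out of $\lambda I-\tilde A_{\mathcal I}$, then reduce the remaining $n\times n$ determinant to a $|\mathcal I|\times|\mathcal I|$ one with Sylvester's identity $\det(I_n-UV)=\det(I_k-VU)$. Throughout, fix $\lambda$ with $\lambda I-A_0$ invertible. Write $R(\lambda):=(\lambda I-A_0)^{-1}$ and $k:=|\mathcal I|$.

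\emph{Step 1: rewrite the mode matrix.} Starting from \eqref{eq:Atilde_I_explicit}, split $A_0+\alpha I=-(\lambda I-A_0)+(\lambda+\alpha)I$. This gives
\[
\lambda I-\tilde A_{\mathcal I}=(I-\Pi_{\mathcal I}C_{\mathcal I}^\top)(\lambda I-A_0)+(\lambda+\alpha)\Pi_{\mathcal I}C_{\mathcal I}^\top .
\]
Factoring $(\lambda I-A_0)$ on the right yields
\[
\lambda I-\tilde A_{\mathcal I}=\big(I-\Pi_{\mathcal I}M(\lambda)\big)(\lambda I-A_0),
\qquad M(\lambda):=C_{\mathcal I}^\top\big(I-(\lambda+\alpha)R(\lambda)\big)\in\mathbb C^{k\times n}.
\]
Taking determinants gives $\det(\lambda I-\tilde A_{\mathcal I})=\det(\lambda I-A_0)\det(I_n-\Pi_{\mathcal I}M(\lambda))$.

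\emph{Step 2: reduce the dimension.} By Sylvester's identity, $\det(I_n-\Pi_{\mathcal I}M)=\det(I_k-M\Pi_{\mathcal I})$. The key observation is that \eqref{eq:P_I} and \eqref{eq:S_I} give
\[
C_{\mathcal I}^\top\Pi_{\mathcal I}=C_{\mathcal I}^\top BG^{-1}B^\top C_{\mathcal I}S_{\mathcal I}^{-1}=I_k .
\]
This is where nonsingularity of $S_{\mathcal I}$ is used. Consequently
\[
M\Pi_{\mathcal I}=I_k-(\lambda+\alpha)C_{\mathcal I}^\top R(\lambda)\Pi_{\mathcal I},
\qquad\text{so}\qquad
I_k-M\Pi_{\mathcal I}=(\lambda+\alpha)\,C_{\mathcal I}^\top R(\lambda)BG^{-1}B^\top C_{\mathcal I}S_{\mathcal I}^{-1}=(\lambda+\alpha)F_{\mathcal I}(\lambda)S_{\mathcal I}^{-1}.
\]

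\emph{Step 3: conclude.} Taking the determinant of this $k\times k$ matrix gives $(\lambda+\alpha)^{k}\det F_{\mathcal I}(\lambda)/\det S_{\mathcal I}$. Combined with Step 1, this is exactly \eqref{eq:det_factorization_multi}.

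The only step that is not mechanical is the choice in Step 1 to split $A_0+\alpha I$ around $\lambda$. That split is what makes the identity $C_{\mathcal I}^\top\Pi_{\mathcal I}=I_k$ cancel the constant term in Step 2 and leave the transfer matrix $F_{\mathcal I}(\lambda)$. Once that choice is made, the rest is routine algebra.
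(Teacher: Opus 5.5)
Your proof is correct and follows essentially the paper's argument. In both, you factor out $\lambda I-A_0$, reduce to a $|\mathcal I|\times|\mathcal I|$ determinant via Sylvester's identity (the matrix determinant lemma), and then use $C_{\mathcal I}^\top\Pi_{\mathcal I}=I_{|\mathcal I|}$ together with the resolvent identity $(A_0+\alpha I)(\lambda I-A_0)^{-1}=-I+(\lambda+\alpha)(\lambda I-A_0)^{-1}$; the only difference is cosmetic, in that you split $A_0+\alpha I$ before factoring rather than after.
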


\begin{proof}
Write $\tilde A_{\mathcal I}=A_0-U_{\mathcal I}V_{\mathcal I}^\top$ with
$U_{\mathcal I}:=\Pi_{\mathcal I}$ and $V_{\mathcal I}^\top:=C_{\mathcal I}^\top(A_0+\alpha I)$.
By the matrix determinant lemma,
\[
\det(\lambda I-\tilde A_{\mathcal I})
=\det(\lambda I-A_0)\det\!\Big(I+V_{\mathcal I}^\top(\lambda I-A_0)^{-1}U_{\mathcal I}\Big).
\]
Using $(A_0+\alpha I)(\lambda I-A_0)^{-1}=-I+(\lambda+\alpha)(\lambda I-A_0)^{-1}$
and $C_{\mathcal I}^\top \Pi_{\mathcal I}=I_{|\mathcal I|}$ yields
\[
I+V_{\mathcal I}^\top(\lambda I-A_0)^{-1}U_{\mathcal I}
=(\lambda+\alpha)\,C_{\mathcal I}^\top(\lambda I-A_0)^{-1}\Pi_{\mathcal I}.
\]
Substituting $\Pi_{\mathcal I}=BG^{-1}B^\top C_{\mathcal I}S_{\mathcal I}^{-1}$
gives \eqref{eq:det_factorization_multi}. 
\end{proof}

\begin{corollary}
\label{cor:eigs_minphase_multi}
Fix $\mathcal I\neq\emptyset$ and assume $S_{\mathcal I}$ in \eqref{eq:S_I} is nonsingular.
Define
\[
M:=BG^{-1}B^\top,
\qquad
\mathcal Z_{\mathcal I}(\lambda)
:=
\begin{bmatrix}
\lambda I-A_0 & -M C_{\mathcal I}\\
C_{\mathcal I}^\top & 0
\end{bmatrix}.
\]
Then, for all $\lambda\in\mathbb C$,
\begin{equation}
\det(\lambda I-\tilde A_{\mathcal I})
=
\frac{(\lambda+\alpha)^{|\mathcal I|}}{\det(S_{\mathcal I})}\,
\det\!\big(\mathcal Z_{\mathcal I}(\lambda)\big).
\label{eq:det_rosenbrock_multi}
\end{equation}
Consequently:
\begin{enumerate}
\item $\lambda=-\alpha$ is an eigenvalue of $\tilde A_{\mathcal I}$ with algebraic multiplicity at least $|\mathcal I|$;
\item for every $\lambda\neq -\alpha$ and $\lambda$ different from any of the eigenvalues of $A_0$, $\lambda$ is an eigenvalue of $\tilde A_{\mathcal I}$ if and only if $\det(\mathcal Z_{\mathcal I}(\lambda))=0$.
\end{enumerate}
Equivalently, every eigenvalue of $\tilde A_{\mathcal I}$ different from $-\alpha$ and the eigenvalues of $A_0$,
coincides with a finite invariant zero of the square system
\[
\Sigma_{\mathcal I}:\quad
\dot x = A_0x+M C_{\mathcal I}v,
\qquad
y=C_{\mathcal I}^\top x.
\]
In particular, $\tilde A_{\mathcal I}$ is Hurwitz if and only if
$\Sigma_{\mathcal I}$ is minimum phase, i.e., all finite invariant zeros of
$\Sigma_{\mathcal I}$ lie in $\{\lambda\in\mathbb C:\Re(\lambda)<0\}$.
\end{corollary}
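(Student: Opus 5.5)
The plan is to reduce the corollary to Theorem~\ref{thm:det_factorization_multi} using a Schur-complement identity for the Rosenbrock-type matrix $\mathcal Z_{\mathcal I}(\lambda)$. The identity is first proved off the spectrum of $A_0$ and then extended to all $\lambda$ as a polynomial identity.

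First, fix $\lambda$ with $\lambda I-A_0$ invertible. Taking the Schur complement with respect to the $(1,1)$ block gives
\[
\det(\mathcal Z_{\mathcal I}(\lambda))=\det(\lambda I-A_0)\,\det\!\big(0+C_{\mathcal I}^\top(\lambda I-A_0)^{-1}MC_{\mathcal I}\big)=\det(\lambda I-A_0)\det(F_{\mathcal I}(\lambda)).
\]
Substituting this into \eqref{eq:det_factorization_multi} yields \eqref{eq:det_rosenbrock_multi} for all $\lambda\notin\operatorname{spec}(A_0)$. Both sides of \eqref{eq:det_rosenbrock_multi} are polynomials in $\lambda$, since $\det\mathcal Z_{\mathcal I}(\lambda)$ is a polynomial in the entries of $\mathcal Z_{\mathcal I}(\lambda)$. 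They agree on the complement of a finite set, so they agree identically on $\mathbb C$.

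Item 1 then follows directly: $\det(\lambda I-\tilde A_{\mathcal I})$ contains the polynomial factor $(\lambda+\alpha)^{|\mathcal I|}$, and $\det\mathcal Z_{\mathcal I}$ is itself a polynomial. Item 2 is immediate from the identity, because $(\lambda+\alpha)^{|\mathcal I|}/\det S_{\mathcal I}\neq 0$ whenever $\lambda\neq-\alpha$. In fact the polynomial identity gives more: the restriction $\lambda\notin\operatorname{spec}(A_0)$ is unnecessary.

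For the invariant-zero interpretation, I would note that the Rosenbrock system matrix of $\Sigma_{\mathcal I}$ is $\begin{bmatrix}\lambda I-A_0 & -MC_{\mathcal I}\\ -C_{\mathcal I}^\top & 0\end{bmatrix}$ (or a sign variant, depending on convention). Its determinant differs from $\det\mathcal Z_{\mathcal I}(\lambda)$ only by a factor $(-1)^{|\mathcal I|}$. Since $\Sigma_{\mathcal I}$ is square, its finite invariant zeros are exactly the roots of this determinant, \emph{provided it is not identically zero}.

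I expect the main obstacle to be this nondegeneracy. To settle it, I would expand as $\lambda\to\infty$ using $(\lambda I-A_0)^{-1}=\lambda^{-1}I+O(\lambda^{-2})$, which gives $F_{\mathcal I}(\lambda)=\lambda^{-1}S_{\mathcal I}+O(\lambda^{-2})$. Hence $\det\mathcal Z_{\mathcal I}(\lambda)=\lambda^{\,n-|\mathcal I|}\det S_{\mathcal I}\,(1+O(\lambda^{-1}))$. This shows $\det\mathcal Z_{\mathcal I}$ is a polynomial of exact degree $n-|\mathcal I|$ with leading coefficient $\det S_{\mathcal I}\neq0$. It is therefore nonzero, and its $n-|\mathcal I|$ roots (with multiplicity) are precisely the finite invariant zeros of $\Sigma_{\mathcal I}$. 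The degree count is also consistent with the total degree $n$ of the characteristic polynomial.

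Finally, by \eqref{eq:det_rosenbrock_multi}, the spectrum of $\tilde A_{\mathcal I}$ is the multiset union of $\{-\alpha\}$ (with multiplicity $|\mathcal I|$) and the finite invariant zeros of $\Sigma_{\mathcal I}$. Since $\alpha>0$, the eigenvalue $-\alpha$ is always stable. Hence $\tilde A_{\mathcal I}$ is Hurwitz if and only if every finite invariant zero has negative real part, i.e., $\Sigma_{\mathcal I}$ is minimum phase.
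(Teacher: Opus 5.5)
Your proposal is correct and follows the paper's proof: the Schur complement of $\mathcal Z_{\mathcal I}(\lambda)$ with respect to $\lambda I-A_0$ gives $\det(\lambda I-A_0)\det F_{\mathcal I}(\lambda)$, which you combine with Theorem~\ref{thm:det_factorization_multi} and then extend to all $\lambda\in\mathbb C$ as a polynomial identity to obtain \eqref{eq:det_rosenbrock_multi}. You also verify that $\det\mathcal Z_{\mathcal I}$ is a nonzero polynomial of degree $n-|\mathcal I|$ with leading coefficient $\det S_{\mathcal I}$; the paper leaves this implicit in ``the remaining claims follow immediately,'' and it also follows directly from the identity, since its left-hand side is monic of degree $n$.
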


\begin{proof}
By the Schur complement formula,
\(
\det(\mathcal Z_{\mathcal I}(\lambda))
=
\det(\lambda I-A_0)\det(F_{\mathcal I}(\lambda))
\)
for every $\lambda$ such that $\lambda I-A_0$ is invertible. Combining this with \eqref{eq:det_factorization_multi} gives
\[
\det(\lambda I-\tilde A_{\mathcal I})
=
\frac{(\lambda+\alpha)^{|\mathcal I|}}{\det(S_{\mathcal I})}\,
\det(\mathcal Z_{\mathcal I}(\lambda)).
\]
Since both sides are polynomials in $\lambda$, this identity holds for all $\lambda\in\mathbb C$. The remaining claims follow immediately.
\end{proof}

\begin{remark}
The factorization \eqref{eq:det_factorization_multi} is stated for
$\lambda I-A_0$ invertible. The equivalent transmission-zero characterization in
Cor.~\ref{cor:eigs_minphase_multi} avoids this restriction.
\end{remark}

\section{Equilibria and Active-Face Geometry}
\label{sec:equilibria_geometry}

In this section, we study the equilibrium structure of the piecewise-affine closed-loop dynamics \eqref{eq:pwa_dynamics}. Since each active set defines a distinct affine mode, multiple constraints can generate several isolated equilibria or even continua on intersections of constraint boundaries.


A point $x^\star\in\mathcal C$ is an equilibrium of the closed-loop dynamics \eqref{eq:pwa_dynamics} if and only if there exists an index set $\mathcal I$ such that
\begin{equation}
x^\star\in\overline{\mathcal R}_{\mathcal I}
\quad\text{and}\quad
\tilde A_{\mathcal I}x^\star+\tilde b_{\mathcal I}=0;
\label{eq:eq_characterization}
\end{equation}
see, e.g., \cite{bemporad2000verification,johansson2003piecewise}. In particular, if $\tilde A_{\mathcal I}$ is nonsingular, then the corresponding candidate equilibrium is
\[
x_{\mathcal I}:=-\tilde A_{\mathcal I}^{-1}\tilde b_{\mathcal I},
\]
which is an equilibrium of \eqref{eq:pwa_dynamics} only if $x_{\mathcal I}\in\mathcal R_{\mathcal I}$.


Throughout the paper, we take the desired equilibrium of the nominal closed-loop system as the reference point and, without loss of generality, place it at the origin by a translation of coordinates. All subsequent statements are made in these coordinates.

\subsection{Active-Face Structure of Equilibria}
\label{subsec:active_face_equilibria}

We now establish a fundamental geometric property of equilibria of the closed-loop dynamics induced by multi-constraint safety filters. In contrast to unconstrained or single-constraint settings, any equilibrium associated with a nonempty active set lies on the boundary of the safe set, more precisely on the face determined by the active constraints. This property is specific to the piecewise-affine closed-loop structure induced by the safety filter and will play a central role in the analysis that follows. Recall that $\overline{\mathcal R}_{\mathcal I}$ denotes the closed cell associated with the active set $\mathcal I$.

\begin{proposition}
\label{prop:equilibria_on_faces}
Let $\mathcal I\neq\emptyset$. If $x^\star\in\overline{\mathcal R}_{\mathcal I}$ is an equilibrium of the closed-loop dynamics \eqref{eq:pwa_dynamics}, then
\begin{equation}
C_{\mathcal I}^\top x^\star+d_{\mathcal I}=0,
\label{eq:equilibrium_on_face}
\end{equation}
where $C_{\mathcal I}$ and $d_{\mathcal I}$ are defined in \eqref{eq:CI_dI_def}.
\end{proposition}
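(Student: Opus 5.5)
The plan is to premultiply the equilibrium equation $\tilde A_{\mathcal I}x^\star+\tilde b_{\mathcal I}=0$ by $C_{\mathcal I}^\top$. This isolates the evolution of the active constraint values $h_{\mathcal I}(x):=C_{\mathcal I}^\top x+d_{\mathcal I}$ under the fixed mode $\mathcal I$. The key identity, already used in the proof of Thm.~\ref{thm:det_factorization_multi}, is $C_{\mathcal I}^\top\Pi_{\mathcal I}=C_{\mathcal I}^\top BG^{-1}B^\top C_{\mathcal I}S_{\mathcal I}^{-1}=S_{\mathcal I}S_{\mathcal I}^{-1}=I_{|\mathcal I|}$. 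This requires $S_{\mathcal I}\succ0$, which is part of the standing assumptions.

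First, using \eqref{eq:Atilde_I_explicit}--\eqref{eq:btilde_I_explicit}, I will compute
\[
C_{\mathcal I}^\top\tilde A_{\mathcal I}=C_{\mathcal I}^\top A_0-C_{\mathcal I}^\top(A_0+\alpha I)=-\alpha C_{\mathcal I}^\top
\]
and
\[
C_{\mathcal I}^\top\tilde b_{\mathcal I}=-\alpha d_{\mathcal I}.
\]
Hence, along the mode-$\mathcal I$ dynamics,
\[
C_{\mathcal I}^\top(\tilde A_{\mathcal I}x+\tilde b_{\mathcal I})=-\alpha\,(C_{\mathcal I}^\top x+d_{\mathcal I}).
\]
In other words, $\dot h_{\mathcal I}=-\alpha h_{\mathcal I}$: the active constraints are enforced with equality, so they decay exponentially.

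Second, by the equilibrium characterization \eqref{eq:eq_characterization}, an equilibrium $x^\star\in\overline{\mathcal R}_{\mathcal I}$ satisfies $\tilde A_{\mathcal I}x^\star+\tilde b_{\mathcal I}=0$. Premultiplying by $C_{\mathcal I}^\top$ gives $-\alpha(C_{\mathcal I}^\top x^\star+d_{\mathcal I})=0$. Since $\alpha>0$, this yields \eqref{eq:equilibrium_on_face}.

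The only delicate point is whether the mode-$\mathcal I$ affine law is valid at $x^\star$ when $x^\star$ lies on the boundary of the closed cell rather than in $\mathcal R_{\mathcal I}$ itself. I will handle this through continuity of the piecewise-affine closed loop. The affine map $x\mapsto\tilde A_{\mathcal I}x+\tilde b_{\mathcal I}$ agrees with the true vector field on $\mathcal R_{\mathcal I}$, so by continuity it agrees on $\overline{\mathcal R}_{\mathcal I}$. Therefore $x^\star$ being an equilibrium in the closed cell means exactly that $\tilde A_{\mathcal I}x^\star+\tilde b_{\mathcal I}=0$, which is the form taken in \eqref{eq:eq_characterization}. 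Beyond this continuity remark the argument is a one-line algebraic computation, and I expect no real obstacle.
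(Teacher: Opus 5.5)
Your proposal is correct and follows the paper's proof essentially verbatim: premultiply the mode equilibrium equation by $C_{\mathcal I}^\top$, use $C_{\mathcal I}^\top\Pi_{\mathcal I}=I_{|\mathcal I|}$ to get $-\alpha(C_{\mathcal I}^\top x^\star+d_{\mathcal I})=0$, and divide by $\alpha>0$. Your continuity remark, which justifies $\tilde A_{\mathcal I}x^\star+\tilde b_{\mathcal I}=0$ at points of $\overline{\mathcal R}_{\mathcal I}\setminus\mathcal R_{\mathcal I}$, is a small extra bit of care that the paper takes for granted via \eqref{eq:eq_characterization}.
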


\begin{proof}
Fix a nonempty active set $\mathcal{I}$ and let $x^\star \in \overline{\mathcal R}_{\mathcal{I}}$
be a closed-loop equilibrium associated with this mode. Then
$\tilde{A}_{\mathcal{I}} x^\star + \tilde{b}_{\mathcal{I}} = 0$.
Substituting \eqref{eq:Atilde_I_explicit}--\eqref{eq:btilde_I_explicit} yields
\begin{align*}
0
&= (A-BK)x^\star
- \Pi_{\mathcal{I}}\, C_{\mathcal{I}}^\top (A-BK+\alpha I)x^\star
- \alpha \Pi_{\mathcal{I}} d_{\mathcal{I}} \\
&= (A-BK)x^\star
- \Pi_{\mathcal{I}}\Big(
C_{\mathcal{I}}^\top (A-BK+\alpha I)x^\star + \alpha d_{\mathcal{I}}
\Big).
\end{align*}
Premultiplying by $C_{\mathcal{I}}^\top$ gives
\begin{align*}
0
&= C_{\mathcal I}^\top (A-BK)x^\star \\
&\quad
- C_{\mathcal I}^\top \Pi_{\mathcal I}
\Big(
C_{\mathcal I}^\top (A-BK+\alpha I)x^\star + \alpha d_{\mathcal I}
\Big).
\end{align*}
Using \eqref{eq:P_I}--\eqref{eq:S_I}, we have
\[
C_{\mathcal{I}}^\top \Pi_{\mathcal{I}}
= C_{\mathcal{I}}^\top B G^{-1} B^\top C_{\mathcal{I}}\, S_{\mathcal{I}}^{-1}
= I_{|\mathcal{I}|}.
\]
Hence
\begin{align*}
0
&= C_{\mathcal{I}}^\top (A-BK)x^\star
- \Big(
C_{\mathcal{I}}^\top (A-BK+\alpha I)x^\star + \alpha d_{\mathcal{I}}
\Big) \\
&= -\alpha\big(C_{\mathcal{I}}^\top x^\star + d_{\mathcal{I}}\big).
\end{align*}
Since $\alpha>0$, it follows that
$C_{\mathcal{I}}^\top x^\star + d_{\mathcal{I}}=0$.
\end{proof}

Prop.~\ref{prop:equilibria_on_faces} shows that any equilibrium associated with a nonempty active set $\mathcal I$ satisfies
\(
C_{\mathcal I}^\top x^\star+d_{\mathcal I}=0,
\)
and therefore lies on the constraint boundaries indexed by $\mathcal I$. This geometric fact will be used below to analyze motion along active faces and to explain why unstable modes need not produce unbounded trajectories.

\begin{corollary}
\label{cor:no_equilibrium_interior_regions}
Let $\mathcal I\neq\emptyset$. If
\(
\overline{\mathcal R}_{\mathcal I}\cap\partial\mathcal C=\emptyset,
\)
then the closed-loop dynamics \eqref{eq:pwa_dynamics} have no equilibrium in $\overline{\mathcal R}_{\mathcal I}$. Consequently, every equilibrium of \eqref{eq:pwa_dynamics} belongs to $\{0\}\cup\partial\mathcal C$.
\end{corollary}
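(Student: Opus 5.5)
The plan is a contradiction argument for the first claim, followed by a case split on the active set for the second claim.

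\emph{First claim.} Suppose some $x^\star\in\overline{\mathcal R}_{\mathcal I}$ were an equilibrium. By the characterization \eqref{eq:eq_characterization} this means $\tilde A_{\mathcal I}x^\star+\tilde b_{\mathcal I}=0$. Prop.~\ref{prop:equilibria_on_faces} then gives $C_{\mathcal I}^\top x^\star+d_{\mathcal I}=0$, so $h_i(x^\star)=0$ for every $i\in\mathcal I$. Since $\mathcal I\neq\emptyset$, at least one constraint vanishes at $x^\star$.

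The next step is to show that such a point lies on $\partial\mathcal C$. We have $x^\star\in\overline{\mathcal R}_{\mathcal I}\subseteq\mathcal C$ because $\mathcal C$ is closed. Fix $i\in\mathcal I$; note $c_i\neq0$, since otherwise $h_i\equiv d_i>0$ could not vanish. Then $x^\star-\epsilon c_i\notin\mathcal C$ for every $\epsilon>0$, because $h_i(x^\star-\epsilon c_i)=-\epsilon\|c_i\|^2<0$. Hence $x^\star\in\partial\mathcal C$, which contradicts $\overline{\mathcal R}_{\mathcal I}\cap\partial\mathcal C=\emptyset$.

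\emph{Second claim.} Let $x^\star$ be any equilibrium. By \eqref{eq:eq_characterization} there is some $\mathcal I$ with $x^\star\in\overline{\mathcal R}_{\mathcal I}$ and $\tilde A_{\mathcal I}x^\star+\tilde b_{\mathcal I}=0$.
\begin{itemize}
\item If $\mathcal I\neq\emptyset$, the argument above (without the disjointness hypothesis) already shows $x^\star\in\partial\mathcal C$.
\item If $\mathcal I=\emptyset$, then by \eqref{eq:nominal_mode} the equilibrium equation reads $A_0x^\star=0$. Since $A_0$ is Hurwitz, it is nonsingular, so $x^\star=0$.
\end{itemize}
Therefore every equilibrium lies in $\{0\}\cup\partial\mathcal C$.

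\emph{Main subtlety.} The only delicate step is justifying that $h_i(x^\star)=0$ forces a boundary point. This needs $x^\star\in\mathcal C$, which comes from closedness of $\mathcal C$, and $c_i\neq0$, which comes from $d_i>0$. With those two facts the rest is direct bookkeeping.
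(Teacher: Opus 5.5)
Your argument is correct and follows the paper's route: Prop.~\ref{prop:equilibria_on_faces} places any equilibrium of a nonempty mode on the face $C_{\mathcal I}^\top x^\star+d_{\mathcal I}=0$, hence on $\partial\mathcal C$. You are more complete than the paper on two points it leaves implicit. First, you check that $h_i(x^\star)=0$ with $c_i\neq0$ forces $x^\star\in\partial\mathcal C$. Second, you treat the empty-active-set case ($A_0x^\star=0$ with $A_0$ Hurwitz, hence $x^\star=0$), which is needed for the ``consequently'' clause.
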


\begin{proof}
By Prop.~\ref{prop:equilibria_on_faces}, any equilibrium associated
with a nonempty active set $\mathcal I$ must satisfy
$C_{\mathcal I}^\top x + d_{\mathcal I}=0$ and therefore lie on
$\partial \mathcal C$. Hence, if
$\overline{\mathcal R}_{\mathcal I}$ does not intersect $\partial \mathcal C$,
it cannot contain a closed-loop equilibrium.
\end{proof}

\subsection{Tangency and Invariant Directions of Active Modes}
\label{subsec:tangency_active_modes}

In this subsection, we show that the affine dynamics associated with a fixed
nonempty active set preserve the corresponding constraint equalities: once
$C_{\mathcal I}^\top x + d_{\mathcal I}=0$ holds, it continues to hold along
trajectories of that mode. As a consequence, any instability of an active mode
can only develop along directions tangent to these constraint boundaries. This
property will be used in the sequel to distinguish between unstable yet bounded
behavior and genuinely unbounded trajectories in the multi-constraint case.

\begin{proposition}
\label{prop:tangency_active_modes}
Let $\mathcal I\neq\emptyset$, and let $x(\cdot)$ be a trajectory of the closed-loop dynamics \eqref{eq:pwa_dynamics} on an interval $[t_0,t_1]$ such that
\[
x(t)\in\mathcal R_{\mathcal I},\qquad \forall t\in[t_0,t_1].
\]
Then:
\begin{equation}
\frac{d}{dt}\big(C_{\mathcal I}^\top x(t)+d_{\mathcal I}\big)
=
-\alpha\big(C_{\mathcal I}^\top x(t)+d_{\mathcal I}\big),
\qquad \forall t\in[t_0,t_1],
\label{eq:tangency_identity}
\end{equation}
with $C_{\mathcal I}$ and $d_{\mathcal I}$ defined in \eqref{eq:CI_dI_def}. In particular, if
\(
C_{\mathcal I}^\top x(t_0)+d_{\mathcal I}=0,
\)
then
\(
C_{\mathcal I}^\top x(t)+d_{\mathcal I}=0,
\)
for all \(t\in[t_0,t_1].
\)
\end{proposition}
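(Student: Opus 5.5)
The plan is to differentiate the output $y(t):=C_{\mathcal I}^\top x(t)+d_{\mathcal I}$ along the mode dynamics and reuse the algebra from the proof of Prop.~\ref{prop:equilibria_on_faces}. Since $x(t)\in\mathcal R_{\mathcal I}$ on $[t_0,t_1]$, the trajectory satisfies $\dot x=\tilde A_{\mathcal I}x+\tilde b_{\mathcal I}$ there, with the derivative understood one-sidedly at the endpoints. Because $d_{\mathcal I}$ is constant, $\dot y=C_{\mathcal I}^\top(\tilde A_{\mathcal I}x+\tilde b_{\mathcal I})$.

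Next I substitute the explicit expressions \eqref{eq:Atilde_I_explicit}--\eqref{eq:btilde_I_explicit}:
\[
\dot y = C_{\mathcal I}^\top A_0 x - C_{\mathcal I}^\top\Pi_{\mathcal I}\big(C_{\mathcal I}^\top(A_0+\alpha I)x+\alpha d_{\mathcal I}\big).
\]
The key identity is $C_{\mathcal I}^\top\Pi_{\mathcal I}=C_{\mathcal I}^\top BG^{-1}B^\top C_{\mathcal I}S_{\mathcal I}^{-1}=S_{\mathcal I}S_{\mathcal I}^{-1}=I_{|\mathcal I|}$, which holds by \eqref{eq:S_I}--\eqref{eq:P_I} and the standing assumption $S_{\mathcal I}\succ0$. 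With it, the $C_{\mathcal I}^\top A_0x$ terms cancel and I am left with $\dot y=-\alpha C_{\mathcal I}^\top x-\alpha d_{\mathcal I}=-\alpha y$. This is exactly \eqref{eq:tangency_identity}. It is the time-varying analogue of the computation in Prop.~\ref{prop:equilibria_on_faces}, which is recovered by setting $\dot x=0$.

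For the final claim, the linear ODE $\dot y=-\alpha y$ on $[t_0,t_1]$ integrates to $y(t)=e^{-\alpha(t-t_0)}y(t_0)$. So $y(t_0)=0$ forces $y(t)\equiv0$. Equivalently, uniqueness of solutions of this linear ODE gives the same conclusion.

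No step is a real obstacle. The only point needing care is regularity: the trajectory must be absolutely continuous and must obey the mode-$\mathcal I$ vector field at every $t\in[t_0,t_1]$. Both hold because the closed loop is continuous piecewise-affine and $x(t)$ stays in $\mathcal R_{\mathcal I}$ by hypothesis. Hence $y$ is $C^1$ on the interval, and the scalar ODE argument applies componentwise.
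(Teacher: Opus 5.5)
Your proposal is correct and follows essentially the same route as the paper. You differentiate $C_{\mathcal I}^\top x+d_{\mathcal I}$ along the mode-$\mathcal I$ dynamics, use $C_{\mathcal I}^\top\Pi_{\mathcal I}=I_{|\mathcal I|}$ to obtain $\dot y=-\alpha y$, and then conclude by integrating or by uniqueness of solutions of this linear ODE.
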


\begin{proof}
For $t\in[t_0,t_1]$, since $x(t)\in\mathcal R_{\mathcal I}$, the closed-loop dynamics \eqref{eq:pwa_dynamics} are given by
\(
\dot x(t)=\tilde A_{\mathcal I}x(t)+\tilde b_{\mathcal I}.
\)
Hence,
\[
\frac{d}{dt}\!\left(C_{\mathcal I}^\top x+d_{\mathcal I}\right)
= C_{\mathcal I}^\top \dot x
= C_{\mathcal I}^\top(\tilde A_{\mathcal I}x+\tilde b_{\mathcal I}).
\]
Using \eqref{eq:Atilde_I_explicit}--\eqref{eq:btilde_I_explicit}, we have
\(
C_{\mathcal I}^\top(\tilde A_{\mathcal I}x+\tilde b_{\mathcal I})
= C_{\mathcal I}^\top(A-BK)x
- C_{\mathcal I}^\top \Pi_{\mathcal I}\!\left(C_{\mathcal I}^\top(A-BK+\alpha I)x+\alpha d_{\mathcal I}\right).
\)
From \eqref{eq:P_I}--\eqref{eq:S_I},
\[
C_{\mathcal I}^\top \Pi_{\mathcal I}
= C_{\mathcal I}^\top BG^{-1}B^\top C_{\mathcal I}\,S_{\mathcal I}^{-1}
= I_{|\mathcal I|}.
\]
Substituting gives
\[
C_{\mathcal I}^\top(\tilde A_{\mathcal I}x+\tilde b_{\mathcal I})
= -\alpha(C_{\mathcal I}^\top x+d_{\mathcal I}),
\]
which yields \eqref{eq:tangency_identity}. The final claim follows by uniqueness of solutions of $\dot z=-\alpha z$ with $z(t_0)=0$.
\end{proof}

Prop.~\ref{prop:tangency_active_modes} shows that the intersection of the
constraint boundaries indexed by $\mathcal I$ is forward invariant under the
mode indexed by $\mathcal I$ (i.e., on any interval for which $x(t)\in
\mathcal R_{\mathcal I}$). We next show that this invariance restricts the
directions in which instability may arise.

Equation~\eqref{eq:tangency_identity} admits the explicit solution
\[
C_{\mathcal I}^\top x(t)+d_{\mathcal I}
= e^{-\alpha(t-t_0)}\big(C_{\mathcal I}^\top x(t_0)+d_{\mathcal I}\big),
\]
for any interval on which the active set $\mathcal I$ remains fixed. In
particular, deviations from the constraint boundaries indexed by
$\mathcal I$ decay exponentially at rate $\alpha$. This shows that while
instability may occur along directions tangent to the constraint boundaries,
the component of the dynamics normal to those boundaries is uniformly
contracting under a fixed active mode.

\begin{corollary}
\label{cor:tangency_unstable_directions}
Let $\mathcal I\neq\emptyset$. If $v\in\mathbb R^n$ is an eigenvector of $\tilde A_{\mathcal I}$, defined in \eqref{eq:Atilde_I_explicit}, associated with an eigenvalue $\lambda\neq-\alpha$, then
\[
C_{\mathcal I}^\top v=0.
\]
 In particular, any eigenvector associated with an eigenvalue satisfying $\Re(\lambda)>0$ is tangent to the constraint boundaries indexed by $\mathcal I$.
\end{corollary}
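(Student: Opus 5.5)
The plan is to reuse the algebraic identity at the heart of Prop.~\ref{prop:tangency_active_modes}, now in matrix form rather than along trajectories. The key observation is that $C_{\mathcal I}^\top \Pi_{\mathcal I}=I_{|\mathcal I|}$, which follows from \eqref{eq:S_I}--\eqref{eq:P_I}. Premultiplying \eqref{eq:Atilde_I_explicit} by $C_{\mathcal I}^\top$ then gives
\[
C_{\mathcal I}^\top \tilde A_{\mathcal I}
= C_{\mathcal I}^\top A_0 - C_{\mathcal I}^\top(A_0+\alpha I)
= -\alpha\, C_{\mathcal I}^\top .
\]
This is the homogeneous (linear-part) version of \eqref{eq:tangency_identity}. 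Equivalently, each column of $C_{\mathcal I}$ is a left eigenvector of $\tilde A_{\mathcal I}$ for the eigenvalue $-\alpha$. This is consistent with item 1 of Cor.~\ref{cor:eigs_minphase_multi}.

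Next, let $v\in\mathbb C^n$ (or $\mathbb R^n$) satisfy $\tilde A_{\mathcal I}v=\lambda v$ with $\lambda\neq-\alpha$. Multiplying on the left by $C_{\mathcal I}^\top$ gives two expressions for the same quantity:
\[
\lambda\, C_{\mathcal I}^\top v = C_{\mathcal I}^\top \tilde A_{\mathcal I} v = -\alpha\, C_{\mathcal I}^\top v .
\]
Hence $(\lambda+\alpha)C_{\mathcal I}^\top v=0$. Since $\lambda+\alpha\neq 0$, this yields $C_{\mathcal I}^\top v=0$. I would phrase the argument for complex eigenvectors, so that eigenvalues with nonzero imaginary part are also covered. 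For such eigenvalues, $C_{\mathcal I}^\top v=0$ then implies that both the real and imaginary parts of $v$ lie in $\ker C_{\mathcal I}^\top$. This is the natural meaning of ``tangent'' when $v$ is complex.

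For the final sentence of the statement: if $\Re(\lambda)>0$, then since $\alpha>0$ we have $\lambda\neq-\alpha$. The first part therefore applies and gives $C_{\mathcal I}^\top v=0$. This says precisely that $v$ lies in the tangent space of the face $\{x: C_{\mathcal I}^\top x+d_{\mathcal I}=0\}$, that is, $v$ is orthogonal to every $c_i$ with $i\in\mathcal I$.

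I do not expect a real obstacle here. The only point requiring care is the identity $C_{\mathcal I}^\top\Pi_{\mathcal I}=I$, which needs $S_{\mathcal I}$ to be nonsingular. That is the standing assumption $S_{\mathcal I}\succ0$, so I would cite it explicitly.
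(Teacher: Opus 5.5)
Your proposal is correct and follows the paper's proof essentially verbatim. The paper also uses $C_{\mathcal I}^\top\Pi_{\mathcal I}=I_{|\mathcal I|}$ to obtain $C_{\mathcal I}^\top\tilde A_{\mathcal I}=-\alpha C_{\mathcal I}^\top$ and then concludes $(\lambda+\alpha)C_{\mathcal I}^\top v=0$. Your extra remarks are correct and spell out points the paper leaves implicit: the argument works verbatim for complex eigenvectors, and $\Re(\lambda)>0$ forces $\lambda\neq-\alpha$ because $\alpha>0$.
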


\begin{proof}
From \eqref{eq:S_I}, \eqref{eq:P_I}, and \eqref{eq:Atilde_I_explicit},
\[
C_{\mathcal I}^\top \tilde A_{\mathcal I}
=
C_{\mathcal I}^\top A_0
-
C_{\mathcal I}^\top \Pi_{\mathcal I} C_{\mathcal I}^\top (A_0+\alpha I)
=
-\alpha\,C_{\mathcal I}^\top.
\]
Let $v$ satisfy $\tilde A_{\mathcal I}v=\lambda v$. Premultiplying by
$C_{\mathcal I}^\top$ and using the identity above gives
\[
\lambda\, C_{\mathcal I}^\top v
= C_{\mathcal I}^\top \tilde A_{\mathcal I}v
= -\alpha\, C_{\mathcal I}^\top v,
\]
hence $(\lambda+\alpha)C_{\mathcal I}^\top v=0$. If $\lambda\neq -\alpha$, then
$C_{\mathcal I}^\top v=0$.
\end{proof}

Prop.~\ref{prop:tangency_active_modes}, Cor.~\ref{cor:tangency_unstable_directions}, and Prop.~\ref{prop:equilibria_on_faces} show that, for any nonempty active set $\mathcal I$, the corresponding constraint boundaries are invariant under the mode \eqref{eq:pwa_dynamics} and any unstable eigendirection of $\tilde A_{\mathcal I}$ is tangent to them. Thus, in the multi-constraint case, mode instability alone does not imply unbounded trajectories.


The following example shows that, with multiple constraints, an active-set mode may be
unstable while the closed-loop trajectories remain bounded.


\begin{figure}[t]
    \centering
    \includegraphics[width=.8\linewidth]{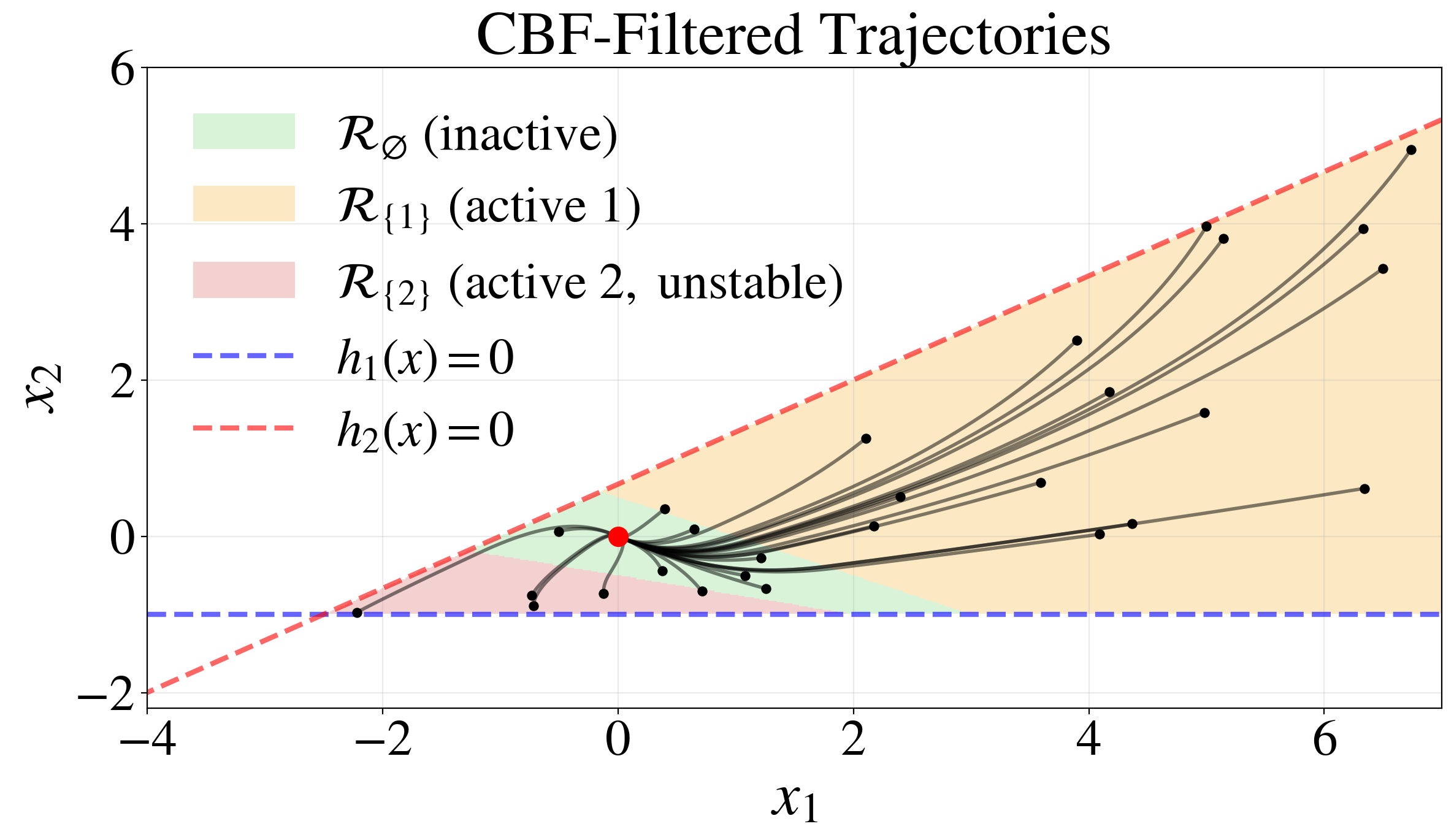}
    \caption{Closed-loop trajectories for
Example~\ref{ex:unstable_no_divergence}.} 
    \label{fig:not_unbounded}
\end{figure}

\begin{example}
\label{ex:unstable_no_divergence}
Consider \eqref{eq:system} with $n=2$, $m=1$,
\[
A=\mathrm{diag}(-1,-2),\qquad
B=\begin{bmatrix}1\\[0.3mm]1\end{bmatrix},
\]
and  $u_{\rm nom}(x)=-Kx$, where
\(
K=\begin{bmatrix}1&1\end{bmatrix}.
\)
Let $G=1$ and $\alpha=1$ in \eqref{eq:cbf_qp}, and impose the constraints
\[
h_1(x)=x_2+1\ge0,\qquad
h_2(x)=x_1-\tfrac32 x_2+1\ge0.
\]
For this system, the active-set mode matrix $\tilde A_{\{2\}}$ has a positive eigenvalue. Nevertheless, Fig.~\ref{fig:not_unbounded} shows that all trajectories starting in $\mathcal C$ remain bounded and converge to the origin. By Cor.~\ref{cor:tangency_unstable_directions}, any unstable eigendirection of $\tilde A_{\{2\}}$ is tangent to the boundary $h_2(x)=0$. However, trajectories leave $\mathcal R_{\{2\}}$ in finite time and switch modes, preventing unbounded growth under the fixed active set $\mathcal I=\{2\}$.
\end{example}


\subsection{Divergence and Recession-Cone Characterization}
\label{subsec:divergence_recession}

Prop.~\ref{prop:tangency_active_modes} and Cor.~\ref{cor:tangency_unstable_directions} show that instability of an active mode $\tilde A_{\mathcal I}$ can only occur along directions tangent to the corresponding constraint boundaries. But tangential instability alone does not imply divergence: for divergence to occur while the active set remains fixed, the unstable direction must stay inside both the active-set region and the safe set for arbitrarily large motion along that direction.

In the affine-data setting (linear dynamics, affine constraints, quadratic
cost), each active-set region $\mathcal R_{\mathcal I}$ is a polyhedron.
Because inactive constraints are typically defined by strict inequalities, we work with the closed cell $\overline{\mathcal R}_{\mathcal I}$.
Since $\overline{\mathcal R}_{\mathcal I}$ is polyhedral, it admits a half-space representation
\begin{equation}
\overline{\mathcal R}_{\mathcal I}
= \{\, x \in \mathbb R^n : H_{\mathcal I} x \le h_{\mathcal I} \,\}.
\label{eq:PI_Hrepr}
\end{equation}

\begin{definition}[Recession Cone]For a closed convex set $\mathcal P$, its recession cone is
\[
\operatorname{rec}(\mathcal P)
:= \{\, v \in \mathbb R^n :
x + \rho v \in \mathcal P,\ \forall x \in \mathcal P,\ \forall \rho \ge 0 \,\}.
\]
Geometrically, $\operatorname{rec}(\mathcal P)$ is the set of directions along
which $\mathcal P$ extends to infinity.\end{definition}

The recession cone of a polyhedron admits a simple characterization.
In particular, if
\(
\mathcal P = \{ x : H x \le h \},
\)
then (see, e.g., \cite[Sec.~2.5]{boyd2004convex})
\begin{equation}
\operatorname{rec}(\mathcal P)
= \{\, v \in \mathbb R^n : H v \le 0 \,\}.
\label{eq:recHx}
\end{equation}
 In particular, if
$\overline{\mathcal R}_{\mathcal I}$ is bounded, then $\operatorname{rec}(\overline{\mathcal R}_{\mathcal I})=\{0\}$,
so no nonzero direction can remain feasible for all scales under the fixed
active-set inequalities.


We now give a divergence certificate for a fixed active set: if a mode has a real unstable eigenvector that is also a recession direction of the corresponding polyhedral region, then the closed-loop dynamics admit an unbounded trajectory that remains in that region.

\begin{proposition}
\label{prop:persistent_unstable_ray}
Let $\mathcal I\neq\emptyset$, and let $\overline{\mathcal R}_{\mathcal I}$ denote the closed cell associated with the active set $\mathcal I$. Assume that $\tilde A_{\mathcal I}$, given by \eqref{eq:Atilde_I_explicit}, is invertible, and define
\(
p_{\mathcal I}:=-\tilde A_{\mathcal I}^{-1}\tilde b_{\mathcal I},
\)
where $\tilde b_{\mathcal I}$ is given by \eqref{eq:btilde_I_explicit}. Suppose that
\begin{enumerate}
\item $p_{\mathcal I}\in\overline{\mathcal R}_{\mathcal I}$,
\item $\tilde A_{\mathcal I}$ has a real eigenvalue $\lambda>0$ with real  eigenvector $v\neq0$,
\item $v\in\operatorname{rec}(\overline{\mathcal R}_{\mathcal I})$.
\end{enumerate}
Then the closed-loop dynamics \eqref{eq:pwa_dynamics} admit an unbounded trajectory that remains in $\overline{\mathcal R}_{\mathcal I}$ for all $t\ge0$.
\end{proposition}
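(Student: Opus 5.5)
We construct the trajectory explicitly along the unstable ray emanating from $p_{\mathcal I}$. For $\varepsilon>0$, set $x_0:=p_{\mathcal I}+\varepsilon v$ and consider the candidate
\[
x(t):=p_{\mathcal I}+\varepsilon e^{\lambda t}v,\qquad t\ge0.
\]
Since $\tilde A_{\mathcal I}p_{\mathcal I}+\tilde b_{\mathcal I}=0$ and $\tilde A_{\mathcal I}v=\lambda v$, we get $\dot x=\lambda\varepsilon e^{\lambda t}v=\tilde A_{\mathcal I}(x-p_{\mathcal I})=\tilde A_{\mathcal I}x+\tilde b_{\mathcal I}$. Hence $x(\cdot)$ solves the affine mode dynamics \eqref{eq:pwa_dynamics} for $\mathcal I$. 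Because $\lambda>0$ and $v\neq0$, we have $\|x(t)\|\to\infty$.

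Next we verify that the trajectory stays in the cell. By hypothesis 1, $p_{\mathcal I}\in\overline{\mathcal R}_{\mathcal I}$. By hypothesis 3 and \eqref{eq:recHx}, $H_{\mathcal I}v\le0$. Since $\varepsilon e^{\lambda t}\ge0$,
\[
H_{\mathcal I}x(t)=H_{\mathcal I}p_{\mathcal I}+\varepsilon e^{\lambda t}H_{\mathcal I}v\le h_{\mathcal I},
\]
so $x(t)\in\overline{\mathcal R}_{\mathcal I}$ for all $t\ge0$. Equivalently, this follows directly from the definition of the recession cone applied at the point $p_{\mathcal I}$.

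The main obstacle is linking this mode trajectory to the actual closed loop, which is only guaranteed to use mode $\mathcal I$ on $\mathcal R_{\mathcal I}$, while $x(t)$ might lie on the boundary of the closed cell. We handle this with continuity. The closed-loop vector field $f(x)=Ax+Bu^\star(x)$ is continuous and piecewise affine, and it agrees with $\tilde A_{\mathcal I}x+\tilde b_{\mathcal I}$ on $\mathcal R_{\mathcal I}$. Taking limits, the two also agree on $\operatorname{cl}(\mathcal R_{\mathcal I})=\overline{\mathcal R}_{\mathcal I}$; this is the same closed-cell convention used in \eqref{eq:eq_characterization}. Since $\overline{\mathcal R}_{\mathcal I}\subseteq\mathcal C$, the curve $x(\cdot)$ satisfies $\dot x=f(x)$ for all $t\ge0$. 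It is therefore a genuine closed-loop trajectory, and by uniqueness (the field is piecewise affine and continuous, hence Lipschitz) it is the trajectory starting from $x_0$. This completes the plan.

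As a consistency check, Cor.~\ref{cor:tangency_unstable_directions} gives $C_{\mathcal I}^\top v=0$. By Prop.~\ref{prop:equilibria_on_faces}, $p_{\mathcal I}$ lies on the face $C_{\mathcal I}^\top x+d_{\mathcal I}=0$, so the ray remains on the active face, in agreement with Prop.~\ref{prop:tangency_active_modes}.
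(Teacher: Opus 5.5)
Your proposal is correct and follows the paper's proof essentially step for step: the paper uses the same ray $x(t)=p_{\mathcal I}+e^{\lambda t}v$ (your construction with $\varepsilon=1$) and the same recession-cone argument at $p_{\mathcal I}$. Your continuity step, showing that the closed-loop field coincides with $\tilde A_{\mathcal I}x+\tilde b_{\mathcal I}$ on all of $\overline{\mathcal R}_{\mathcal I}\subseteq\mathcal C$, makes explicit something the paper leaves implicit, namely that the mode trajectory is a genuine closed-loop trajectory even when it runs along the boundary of the cell.
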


\begin{proof}
Define \(x(t):=p_{\mathcal I}+e^{\lambda t}v\) for \(t\ge0\). Since \(\tilde A_{\mathcal I}p_{\mathcal I}+\tilde b_{\mathcal I}=0\) and \(\tilde A_{\mathcal I}v=\lambda v\), the trajectory \(x(\cdot)\) satisfies \(\dot x=\tilde A_{\mathcal I}x+\tilde b_{\mathcal I}\). Moreover, because \(p_{\mathcal I}\in\overline{\mathcal R}_{\mathcal I}\) and \(v\in\operatorname{rec}(\overline{\mathcal R}_{\mathcal I})\), we have \(p_{\mathcal I}+\rho v\in\overline{\mathcal R}_{\mathcal I}\) for all \(\rho\ge0\). Taking \(\rho=e^{\lambda t}\) gives \(x(t)\in\overline{\mathcal R}_{\mathcal I}\) for all \(t\ge0\). Since \(\lambda>0\) and \(v\neq0\), \(\|x(t)\|\to\infty\) as \(t\to\infty\).
\end{proof}

Prop.~\ref{prop:persistent_unstable_ray} gives a sufficient certificate for divergence under a fixed active set: if $\tilde A_{\mathcal I}$ has a real unstable eigenvector in $\operatorname{rec}(\overline{\mathcal R}_{\mathcal I})$, then the corresponding trajectory is unbounded and no mode switch occurs.

Finally, Prop.~\ref{prop:persistent_unstable_ray} gives a practical test for divergence under a fixed active set. In practice, one can: (i) compute the real unstable eigenpairs $(\lambda,v)$ of $\tilde A_{\mathcal I}$ with $\lambda>0$; (ii) check that the equilibrium satisfies $H_{\mathcal I}p_{\mathcal I}\le h_{\mathcal I}$; and (iii) check the recession condition $H_{\mathcal I}v\le0$.

If the unstable subspace of $\tilde A_{\mathcal I}$ has dimension $r>1$, a useful first step is to test whether it intersects the recession cone nontrivially. Let $U_{\mathcal I}\in\mathbb R^{n\times r}$ have columns spanning a real unstable invariant subspace of $\tilde A_{\mathcal I}$. Consider
\begin{equation}
\text{find } \eta
\quad\text{s.t.}\quad
H_{\mathcal I}U_{\mathcal I}\eta\le0,
\quad
\|\eta\|_1=1.
\label{eq:LP_rec_test}
\end{equation}
The normalization $\|\eta\|_1=1$ excludes the trivial solution and can be handled by the standard split $\eta=\eta^+-\eta^-$ with $\eta^+,\eta^-\ge0$. If \eqref{eq:LP_rec_test} is feasible, then $v:=U_{\mathcal I}\eta$ belongs to $\operatorname{rec}(\overline{\mathcal R}_{\mathcal I})$ and gives a candidate divergence direction. Prop.~\ref{prop:persistent_unstable_ray} applies whenever this $v$ is a real unstable eigenvector of $\tilde A_{\mathcal I}$, yielding the explicit trajectory $x(t)=p_{\mathcal I}+e^{\lambda t}v$.

These tests show why unstable active modes need not produce unbounded closed-loop behavior: if no unstable eigenvector lies in the recession cone of its region, then unbounded growth under a fixed active set is impossible. This does not by itself prove global boundedness, since divergence may still occur through switching, but it motivates the Lyapunov and invariance analysis in the next section.

\begin{example}
\label{ex:strip_lp_divergence}
Consider \eqref{eq:system} with
\[
A=\begin{bmatrix}2&0\\[0.3mm]1&0\end{bmatrix},
\qquad
B=\begin{bmatrix}1\\[0.3mm]1\end{bmatrix},
\qquad
K=\begin{bmatrix}4&-1\end{bmatrix},
\]
with nominal controller \(u_{\rm nom}(x)=-Kx\), and let \(\alpha=1\) and \(G=1\) in \eqref{eq:cbf_qp}. Impose the constraints
\[
h_1(x)=x_2+1\ge0,
\qquad
h_2(x)=-x_2+1\ge0.
\]

For \(\mathcal I=\{1\}\) and \(\mathcal I=\{2\}\), the corresponding active modes are unstable, with unstable eigenspace \(\operatorname{span}\{e_1\}\) associated with the eigenvalue \(\lambda=1\). The corresponding mode equilibria are \(
p_{\{1\}}=\begin{bmatrix}0& -1\end{bmatrix}^\top,\;
p_{\{2\}}=\begin{bmatrix}0& 1\end{bmatrix}^\top,
\)
and satisfy \(p_{\mathcal I}\in\overline{\mathcal R}_{\mathcal I}\) for \(\mathcal I\in\{\{1\},\{2\}\}\).

The test \eqref{eq:LP_rec_test} selects the feasible recession direction \(v=e_1\) for \(\mathcal I=\{1\}\) and \(v=-e_1\) for \(\mathcal I=\{2\}\). Hence all assumptions of Prop.~\ref{prop:persistent_unstable_ray} hold. The resulting trajectories are
\[
x_{\{1\}}(t)\!=\!p_{\{1\}}\!+\!e^t e_1
\!=\!\begin{bmatrix}e^t\\[-0.5mm]-1\end{bmatrix}\!,
\;
x_{\{2\}}(t)\!=\!p_{\{2\}}\!-\!e^t e_1
\!=\!\begin{bmatrix}-e^t\\[0.3mm]1\end{bmatrix},
\]
and remain in \(\overline{\mathcal R}_{\mathcal I}\) for all \(t\ge0\). As shown in Fig.~\ref{fig:unbounded}, trajectories in \(\mathcal R_{\{1\}}\) diverge along \(+e_1\) on the boundary \(x_2=-1\), while those in \(\mathcal R_{\{2\}}\) diverge along \(-e_1\) on the boundary \(x_2=1\).
\end{example}

\begin{figure}[t]
    \centering
    \includegraphics[width=\linewidth]{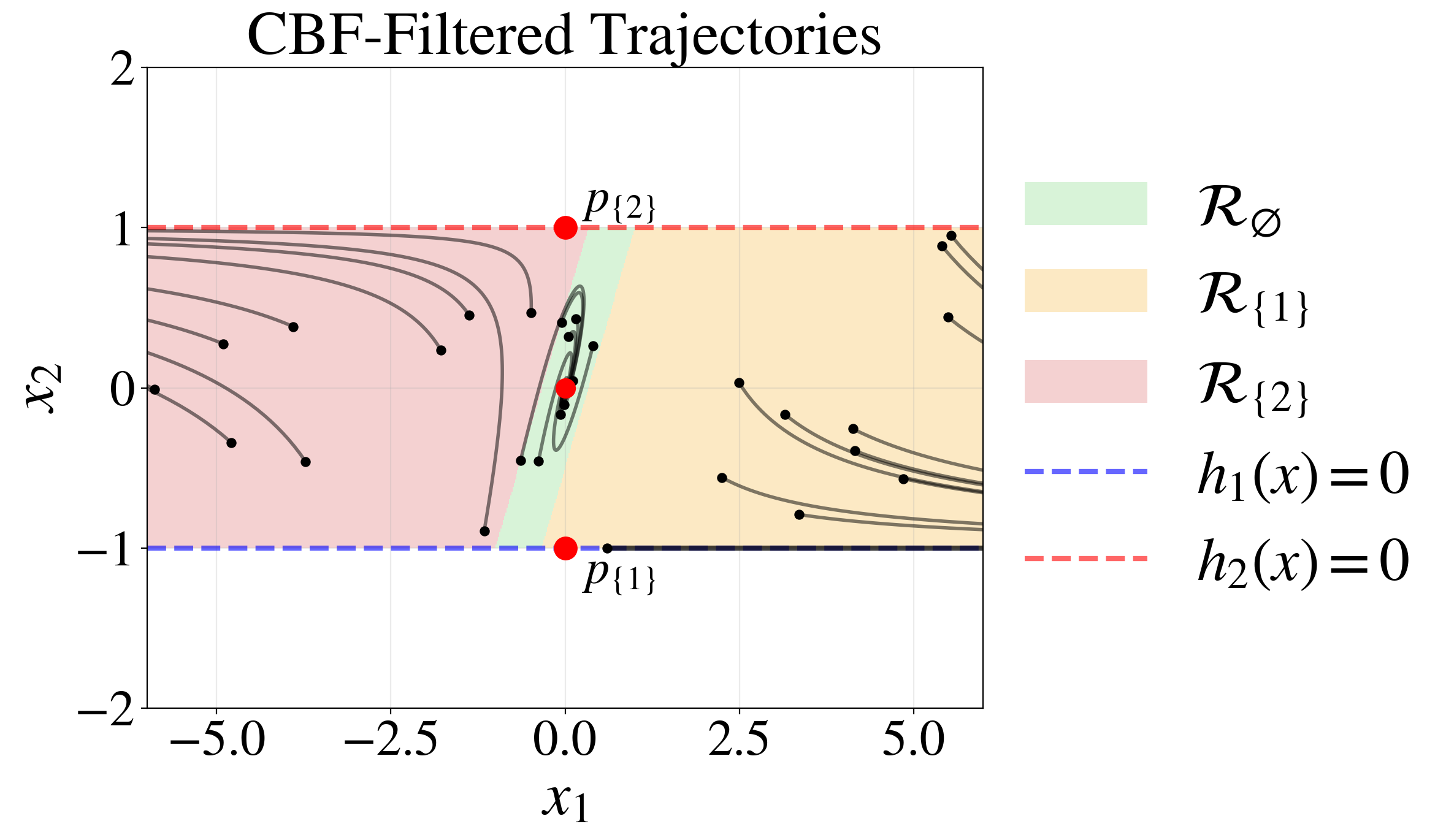}
    \caption{Closed-loop trajectories for
Example~\ref{ex:strip_lp_divergence}.} 
    \label{fig:unbounded}
\end{figure}

\section{Boundedness and Invariant-Set Behavior}
\label{sec:boundedness}


The previous section showed that instability of an active mode need not imply divergence. Unbounded behavior requires an unstable direction that remains in the recession cone associated with $\overline{\mathcal{R}}_{\Ic}$.
If this condition is not satisfied, and the trajectory associated with such unstable direction switches to other modes, boundedness might hold.

This section formalizes these observations using Lyapunov and invariance tools. We first give a strict common quadratic Lyapunov condition for global exponential stability, and then introduce less conservative region-wise Lyapunov conditions that exploit the polyhedral active-set partition. These results characterize the long-term behavior that can still be concluded when a strict common quadratic certificate is unavailable.


\subsection{Global Exponential Stability via a CQLF}
\label{subsec:cqlf_ges}

We now present a sufficient condition for global exponential stability (GES) of the origin based on the existence of a common quadratic Lyapunov function (CQLF) for all active-set modes of the closed-loop dynamics \eqref{eq:pwa_dynamics}. Specifically, we seek a quadratic function \(V(x)=x^\top Px\) that decreases strictly along every mode. Let
\[
\mathcal J
:=
\{\mathcal I \subseteq \{1,\dots,p\} :
\mathcal R_{\mathcal I} \neq \emptyset \}
\]
denote the set of active sets that occur on \(\mathcal C\).

\begin{theorem}
\label{thm:ges_cqlf}
Suppose there exist \(P\succ0\) and \(\epsilon>0\) such that
\begin{equation}
P\tilde A_{\mathcal I}+\tilde A_{\mathcal I}^\top P\preceq-\epsilon I,
\qquad \forall \mathcal I\in\mathcal J,
\label{eq:cqlf_strict}
\end{equation}
where \(\tilde A_{\mathcal I}\) is given by \eqref{eq:Atilde_I_explicit}. Then the origin is GES in \(\mathcal C\) for the closed-loop dynamics \eqref{eq:pwa_dynamics}.
\end{theorem}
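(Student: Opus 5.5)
The plan is to avoid bounding the affine offsets $\tilde b_{\mathcal I}$ directly. Instead, I would exploit the continuity of the closed-loop vector field together with $f(0)=0$. The naive computation $\dot V=2x^\top P(\tilde A_{\mathcal I}x+\tilde b_{\mathcal I})$ leaves the cross term $2x^\top P\tilde b_{\mathcal I}$, which the LMI \eqref{eq:cqlf_strict} does not control. So the key step is to show that the closed-loop field $f(x):=Ax+Bu^\star(x)$ can be written, for each fixed $x\in\mathcal C$, as a convex combination of the mode matrices applied to $x$:
\[
f(x)=\sum_{k}\theta_k(x)\,\tilde A_{\mathcal I_k}x,\qquad \theta_k(x)\ge0,\ \ \sum_k\theta_k(x)=1,
\]
where the $\mathcal I_k$ range over active sets in $\mathcal J$.

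To obtain this representation, fix $x\in\mathcal C$ and consider the segment $s\mapsto sx$, $s\in[0,1]$. It lies in $\mathcal C$ because $\mathcal C$ is convex and $0\in\operatorname{int}(\mathcal C)$. The regions $\mathcal R_{\mathcal I}$ are finitely many convex polyhedral sets that partition $\mathcal C$, so the segment meets each of them in an interval, possibly degenerate. Hence $[0,1]$ splits into finitely many intervals $J_k$, and on each $J_k$ the map $g(s):=f(sx)$ equals $s\tilde A_{\mathcal I_k}x+\tilde b_{\mathcal I_k}$. Since $g$ is continuous (the closed loop is continuous piecewise affine) and affine on each piece, it is absolutely continuous with $g'(s)=\tilde A_{\mathcal I_k}x$ on the interior of each nondegenerate $J_k$. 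Degenerate pieces have measure zero. Using $f(0)=\tilde b_\emptyset=0$, we get $f(x)=g(1)-g(0)=\int_0^1 g'(s)\,ds=\sum_k|J_k|\,\tilde A_{\mathcal I_k}x$, which gives the convex weights $\theta_k=|J_k|$.

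With this representation, for $V(x)=x^\top Px$ we obtain
\[
\dot V=2x^\top Pf(x)=\sum_k\theta_k\,x^\top\big(P\tilde A_{\mathcal I_k}+\tilde A_{\mathcal I_k}^\top P\big)x\le-\epsilon\|x\|^2\le-\tfrac{\epsilon}{\lambda_{\max}(P)}V(x).
\]
The closed loop is globally Lipschitz on $\mathcal C$, being continuous and piecewise affine with finitely many pieces, so solutions exist and are unique. Together with the forward invariance of $\mathcal C$ under the filter, this yields $V(x(t))\le e^{-\epsilon t/\lambda_{\max}(P)}V(x(0))$. Therefore
\[
\|x(t)\|\le\sqrt{\lambda_{\max}(P)/\lambda_{\min}(P)}\;e^{-\epsilon t/(2\lambda_{\max}(P))}\|x(0)\|\quad\text{for all }x(0)\in\mathcal C,
\]
which is GES in $\mathcal C$.

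I expect the segment/convex-combination step to be the main obstacle. The delicate points are that the regions are not closed, that the segment may run inside a lower-dimensional cell, and the justification that $g$ is piecewise affine with the stated derivatives. I would handle these by relying only on the facts that each point of $\mathcal C$ belongs to exactly one $\mathcal R_{\mathcal I}$, that $f$ agrees with the affine map $\tilde A_{\mathcal I}x+\tilde b_{\mathcal I}$ there, and that the finitely many convex pieces of the segment are intervals. These facts suffice for the fundamental theorem of calculus argument on each interval, with continuity of $f$ gluing the pieces together.
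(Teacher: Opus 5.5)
Your proof is correct, but it takes a different route from the paper. The paper never computes $\dot V$. It notes that the closed loop is a continuous piecewise-affine vector field on the convex, forward-invariant set $\mathcal C$. It then invokes \cite[Thm.~1]{pavlov2007convergence}, which turns the common LMI \eqref{eq:cqlf_strict} into exponential \emph{incremental} convergence: any two solutions approach each other exponentially in the $P$-metric. GES follows by comparing an arbitrary solution with the equilibrium solution $x\equiv 0$.

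You instead derive the decay inequality directly. You run the line-segment argument only between $x$ and the origin and write $f(x)=\sum_k\theta_k\tilde A_{\mathcal I_k}x$ as a convex combination of mode matrices. This is the mechanism inside Pavlov's result, and the paper itself uses it later in Cor.~\ref{cor:matrix_to_lasalle}; you specialize it to one fixed endpoint.

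Your version has several advantages:
\begin{itemize}
\item It is self-contained.
\item It shows explicitly why the offsets $\tilde b_{\mathcal I}$, which the LMI does not control, drop out: continuity of $f$ together with $f(0)=0$, and the latter holds because $d_i>0$ places $0\in\mathcal R_{\emptyset}$.
\item It gives explicit constants $\sqrt{\lambda_{\max}(P)/\lambda_{\min}(P)}$ and $\epsilon/(2\lambda_{\max}(P))$.
\item It needs only that $\mathcal C$ is star-shaped about the origin and that the cells are convex, so you avoid checking that an external theorem's hypotheses transfer to the constrained domain $\mathcal C$.
\end{itemize}
The paper's route yields a stronger, incremental conclusion covering all pairs of trajectories, but it outsources the argument.

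A small simplification: you do not need to appeal to absolute continuity. After discarding degenerate pieces, order the breakpoints $0=s_0<s_1<\dots<s_N=1$. Extend each affine piece to its closed subinterval by continuity of $g$. Then telescope $g(s_k)-g(s_{k-1})=(s_k-s_{k-1})\tilde A_{\mathcal I_k}x$.
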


\begin{proof}
Under the standing assumptions, the closed-loop dynamics \eqref{eq:pwa_dynamics} define a continuous piecewise-affine vector field on the forward-invariant set $\mathcal C$, with affine dynamics $f(x)=\tilde A_{\mathcal I}x+\tilde b_{\mathcal I}$ on each region $\mathcal R_{\mathcal I}$. Therefore, the hypotheses of \cite[Thm.~1]{pavlov2007convergence} apply on $\mathcal C$. Since \eqref{eq:cqlf_strict} holds for all $\mathcal I\in\mathcal J$, \cite[Thm.~1]{pavlov2007convergence} yields exponential convergence of solutions on $\mathcal C$ in the quadratic metric induced by $P$. Since $u^\star(0)=0$, the origin is an equilibrium.
By~\cite[Thm.~1]{pavlov2007convergence}, this means that the origin is GES in $\mathcal{C}$.
\end{proof}


\begin{corollary}
\label{cor:unique_equilibrium_cqlf}
Under the assumptions of Thm.~\ref{thm:ges_cqlf}, the origin is the unique equilibrium of the closed-loop dynamics \eqref{eq:pwa_dynamics} in \(\mathcal C\).
\end{corollary}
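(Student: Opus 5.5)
The plan is to use that GES of the origin in $\mathcal C$ already excludes any other rest point. Suppose $x^\star\in\mathcal C$ is an equilibrium of \eqref{eq:pwa_dynamics}. By the standing assumption that the closed loop is well defined and continuous piecewise affine on the forward-invariant set $\mathcal C$, the constant function $x(t)\equiv x^\star$ is a solution starting in $\mathcal C$. Thm.~\ref{thm:ges_cqlf} then gives constants $k\ge1$ and $\lambda>0$ with $\|x(t)\|\le k e^{-\lambda t}\|x^\star\|$ for all $t\ge0$. Letting $t\to\infty$ forces $\|x^\star\|=0$, so $x^\star=0$. Combined with $u^\star(0)=0$, which makes the origin an equilibrium, this proves uniqueness.

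For a route that does not rely on the constants hidden in the citation, I would argue directly with $V(x)=x^\top Px$. First I would note that the closed-loop vector field $f$ is continuous and agrees with $\tilde A_{\mathcal I}x+\tilde b_{\mathcal I}$ on each cell. The LMI \eqref{eq:cqlf_strict} then yields an incremental contraction estimate: $(x-y)^\top P\,(f(x)-f(y))\le -\tfrac{\epsilon}{2}\|x-y\|^2$ for all $x,y\in\mathcal C$. To see this, join $x$ and $y$ by the segment between them, which lies in $\mathcal C$ by convexity. This segment crosses finitely many polyhedral cells. On each piece, $f$ is affine with Jacobian $\tilde A_{\mathcal I}$, and continuity lets the pieces be glued together, so $f(x)-f(y)=\int_0^1 \tilde A(s)(x-y)\,ds$ with $\tilde A(s)\in\{\tilde A_{\mathcal I}\}_{\mathcal I\in\mathcal J}$.

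Applying this estimate with $x=x^\star$ and $y=0$, and using $f(x^\star)=f(0)=0$, gives $0\le-\tfrac{\epsilon}{2}\|x^\star\|^2$. Hence $x^\star=0$.

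The main subtlety is the segment argument. Here one has to make sure that only modes in $\mathcal J$ appear along the segment, which holds because the segment stays in $\mathcal C$. One also needs that boundary points between cells contribute nothing, since they form a measure-zero set in the parameter $s$. Alternatively, I would simply cite the GES conclusion as in the first paragraph, which avoids these details.
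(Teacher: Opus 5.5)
Your first paragraph is exactly the paper's proof: the constant trajectory at a putative equilibrium $x^\star$ must satisfy the GES bound $\|x^\star\|\le ce^{-\beta t}\|x^\star\|$ for all $t\ge0$, which forces $x^\star=0$. Your alternative incremental-contraction route is also sound; it is the same line-segment argument the paper uses in Cor.~\ref{cor:matrix_to_lasalle}, now with the strict margin $\epsilon$, so it gets uniqueness directly from \eqref{eq:cqlf_strict} without going through the GES constants.
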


\begin{proof}
Assume, for contradiction, that there exists an equilibrium \(x^\star\in\mathcal C\) of \eqref{eq:pwa_dynamics} with \(x^\star\neq0\). Then the trajectory starting from \(x^\star\) is constant, i.e., \(x(t)\equiv x^\star\) for all \(t\ge0\).

By Thm.~\ref{thm:ges_cqlf}, the origin is GES in \(\mathcal C\). Hence there exist constants \(c\ge1\) and \(\beta>0\) such that every trajectory satisfies
\[
\|x(t)\|\le ce^{-\beta t}\|x(0)\|,\qquad \forall t\ge0.
\]
Applying this bound to the trajectory with \(x(0)=x^\star\) gives
\[
\|x^\star\|=\|x(t)\|\le ce^{-\beta t}\|x^\star\|,\qquad \forall t\ge0,
\]
which is impossible for \(x^\star\neq0\) as \(t\to\infty\). Therefore, no nonzero equilibrium of \eqref{eq:pwa_dynamics} exists in \(\mathcal C\), and the origin is the unique equilibrium in \(\mathcal C\).
\end{proof}

\begin{remark}
\label{rem:lmi_verification}
The LMIs in \eqref{eq:cqlf_strict} provide a tractable sufficient test for GES
on $\mathcal C$, but the condition is conservative: even if every
$\tilde A_{\mathcal I}$ is Hurwitz, a CQLF may
not exist.
\end{remark}

\begin{remark}
If the strict inequality in \eqref{eq:cqlf_strict} is replaced by
\[
P\tilde A_{\mathcal I}+\tilde A_{\mathcal I}^\top P \preceq 0,
\qquad \forall \mathcal I\in\mathcal J,
\]
then, under the standing assumptions that the closed-loop dynamics \eqref{eq:pwa_dynamics} are continuous piecewise affine on the convex forward-invariant set \(\mathcal C\) and that the origin is an equilibrium, the line-segment argument of \cite[Lemma~2]{pavlov2005convergent} with \(\alpha=0\) implies \(\dot V(x)\le0\) on \(\mathcal C\) for \(V(x)=x^\top Px\). Hence Thm.~\ref{thm:boundedness_lasalle} yields boundedness and convergence to the largest invariant subset of \(
\{x\in\mathcal C:\dot V(x)=0\}
\).
\end{remark}

\subsection{Region-Wise Lyapunov Certificates}
\label{subsec:boundedness_lasalle}

We now present a condition weaker than strict CQLF-based GES.
Instead of requiring strict decay, we assume the existence of a common
quadratic function that is nonincreasing along all modes.

\begin{theorem}
\label{thm:boundedness_lasalle}
Let \(V(x):=x^\top Px\) with \(P\succ0\). If
\begin{equation}
\nabla V(x)^\top\big(\tilde A_{\mathcal I}x+\tilde b_{\mathcal I}\big)\le0,
\qquad
\forall x\in\overline{\mathcal R}_{\mathcal I},
\quad
\forall \mathcal I\in\mathcal J,
\label{eq:lasalle_condition}
\end{equation}
where \(\tilde A_{\mathcal I}\) and \(\tilde b_{\mathcal I}\) are given by
\eqref{eq:Atilde_I_explicit}--\eqref{eq:btilde_I_explicit}, then every trajectory of \eqref{eq:pwa_dynamics} starting in \(\mathcal C\) is bounded and converges to the largest invariant subset of
\begin{equation}
\mathcal E_V:=\{x\in\mathcal C:\dot V(x)=0\}.
\label{eq:EV_set}
\end{equation}
\end{theorem}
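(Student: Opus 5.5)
The plan is a standard LaSalle argument on the forward-invariant set $\mathcal C$. The preliminary step is to confirm that $V$ is nonincreasing along every closed-loop solution, and not only inside each region. The closed-loop vector field $f(x)=\tilde A_{\mathcal I}x+\tilde b_{\mathcal I}$ for $x\in\mathcal R_{\mathcal I}$ is continuous and piecewise affine on $\mathcal C$. The cells $\overline{\mathcal R}_{\mathcal I}$, $\mathcal I\in\mathcal J$, cover $\mathcal C$. By continuity, $f$ coincides with $\tilde A_{\mathcal I}x+\tilde b_{\mathcal I}$ on all of $\overline{\mathcal R}_{\mathcal I}$ (the affine expressions of adjacent cells agree on common boundaries). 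Hence \eqref{eq:lasalle_condition} gives $\dot V(x)=\nabla V(x)^\top f(x)\le0$ for every $x\in\mathcal C$, with no ambiguity at switching surfaces. Since $f$ is continuous and piecewise affine, hence locally Lipschitz, solutions are unique. Because $\mathcal C$ is forward invariant, $t\mapsto V(x(t))$ is absolutely continuous with nonpositive derivative, so it is nonincreasing.

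Next comes boundedness. Since $P\succ0$, we have $\lambda_{\min}(P)\|x(t)\|^2\le V(x(t))\le V(x(0))$. Therefore every solution starting in $\mathcal C$ stays in the compact set $\Omega_0:=\{x\in\mathcal C: V(x)\le V(x(0))\}$, which is closed because $\mathcal C$ is closed. As a consequence solutions cannot escape in finite time, so they exist on $[0,\infty)$ and are bounded. This step is immediate. Its role is to supply the compact positively invariant set that LaSalle's principle needs; $\Omega_0$ is positively invariant by monotonicity of $V$ and forward invariance of $\mathcal C$.

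Finally, we apply LaSalle's invariance principle on $\Omega_0$. The $\omega$-limit set $\omega(x_0)$ is nonempty, compact, and invariant, and $x(t)\to\omega(x_0)$; this uses continuity of $f$ and uniqueness of solutions, which give continuous dependence on initial data. Since $V(x(t))$ is nonincreasing and bounded below, it converges to some $c$, so $V\equiv c$ on $\omega(x_0)$. By invariance, $\dot V=0$ on $\omega(x_0)$. Thus $\omega(x_0)\subseteq\mathcal E_V$, and in fact $\omega(x_0)$ lies in the largest invariant subset of $\mathcal E_V$. That gives the claim.

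The main obstacle is the first step: justifying that the region-wise inequality yields a globally valid derivative bound. This needs two facts. The first is that the closed cells cover $\mathcal C$ and carry the correct affine formula, which relies on continuity of $u^\star$ from the standing regularity assumptions. The second is that invariance in LaSalle is taken relative to the flow restricted to $\mathcal C$, which is legitimate because $\mathcal C$ is forward invariant. The continuity argument should be checked carefully at boundaries where several active sets meet. If it fails, the fallback is to use the fact that $V(x(t))$ is differentiable almost everywhere and to apply the inequality on each time interval spent in a single cell.
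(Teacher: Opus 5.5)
Your proposal is correct and follows essentially the same route as the paper. Both use the region-wise inequality to get $V$ nonincreasing on the forward-invariant set $\mathcal C$, then boundedness from the compact sublevel set, then LaSalle's invariance principle. The paper simply cites LaSalle for continuous piecewise-affine systems, whereas you spell out the $\omega$-limit argument and the continuity reason why $\tilde A_{\mathcal I}x+\tilde b_{\mathcal I}$ gives the closed-loop field on all of $\overline{\mathcal R}_{\mathcal I}$.
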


\begin{proof}
Since \(\mathcal C\) is forward invariant and \(V(x)=x^\top Px\) with \(P\succ0\),
\eqref{eq:lasalle_condition} implies that \(V(x(t))\) is nonincreasing along trajectories of \eqref{eq:pwa_dynamics}. Hence every trajectory starting in \(\mathcal C\) remains in a bounded sublevel set of \(V\), and is therefore bounded. Convergence to the largest invariant subset of $\mathcal E_V$ in \eqref{eq:EV_set} then follows from LaSalle's invariance principle for continuous piecewise-affine systems; see, e.g., \cite{branicky1998multiple,johansson2003piecewise}.
\end{proof}

\begin{remark}
\label{rem:lasalle_vs_cqlf}
Thm.~\ref{thm:boundedness_lasalle} does not require each matrix \(\tilde A_{\mathcal I}\) to be Hurwitz, and can therefore certify boundedness even when some active-set modes are unstable. Moreover, if the largest invariant subset of $\mathcal E_V$ in \eqref{eq:EV_set} is \(\{0\}\), then LaSalle's invariance principle yields GES of the origin in \(\mathcal C\), and hence uniqueness of the equilibrium in \(\mathcal C\).

This condition should be contrasted with the strict CQLF requirement \eqref{eq:cqlf_strict}, which enforces strict decay of the linear part in every direction of \(\mathbb R^n\), uniformly across modes. By contrast, \eqref{eq:lasalle_condition} only requires Lyapunov decrease on each closed cell \(\overline{\mathcal R}_{\mathcal I}\) and for the full affine vector field \(\tilde A_{\mathcal I}x+\tilde b_{\mathcal I}\). These geometric restrictions can exclude the directions responsible for CQLF infeasibility, allowing a common quadratic Lyapunov function to certify stability of the piecewise-affine closed-loop dynamics defined by \eqref{eq:pwa_dynamics}.
\end{remark}

Under the hypotheses of Thm.~\ref{thm:boundedness_lasalle}, every trajectory starting in \(\mathcal C\) is bounded. Moreover, if the active set is constant and nonempty on some interval \([T,\infty)\), i.e., \(x(t)\in\mathcal R_{\mathcal I}\) for all \(t\ge T\) with \(\mathcal I\neq\emptyset\), then Prop.~\ref{prop:tangency_active_modes} yields
\[
C_{\mathcal I}^\top x(t)+d_{\mathcal I}
=
e^{-\alpha(t-T)}\big(C_{\mathcal I}^\top x(T)+d_{\mathcal I}\big)\to0.
\]
Thus, a persistent nonempty mode drives the trajectory toward the corresponding constraint face, while boundedness rules out divergence.

The next corollary gives a convenient, though conservative, matrix-only condition implying \eqref{eq:lasalle_condition}.

\begin{corollary}
\label{cor:matrix_to_lasalle}
Let \(V(x):=x^\top Px\) with \(P\succ0\). If
\[
P\tilde A_{\mathcal I}+\tilde A_{\mathcal I}^\top P\preceq0,
\qquad \forall \mathcal I\in\mathcal J,
\]
then \(\dot V(x)\le0\) for all \(x\in\mathcal C\). Equivalently, \eqref{eq:lasalle_condition} holds.
\end{corollary}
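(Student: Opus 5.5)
The plan is a line-segment (mean-value) argument that uses three facts: $\mathcal C$ is convex, the origin is an equilibrium, and the closed-loop vector field is continuous and piecewise affine. The difficulty is that the hypothesis only constrains the linear parts $\tilde A_{\mathcal I}$, while $\dot V$ also involves the offsets $\tilde b_{\mathcal I}$. These offsets are generally nonzero, so $\dot V = 2x^\top P(\tilde A_{\mathcal I}x+\tilde b_{\mathcal I})$ cannot be bounded cell by cell. Instead, I would write $f(x)$ as an average of the mode Jacobians along the segment from $0$ to $x$, following \cite[Lemma~2]{pavlov2005convergent} with rate zero.

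First, denote the closed-loop field by $f$, so that $f(x)=\tilde A_{\mathcal I}x+\tilde b_{\mathcal I}$ on $\mathcal R_{\mathcal I}$. Then:
\begin{enumerate}
\item Since $u^\star(0)=0$, we have $f(0)=0$.
\item The cells are finitely many polyhedra covering $\mathcal C$, and $f$ is continuous on $\mathcal C$.
\item By continuity, $f(x)=\tilde A_{\mathcal I}x+\tilde b_{\mathcal I}$ also holds for every $x\in\overline{\mathcal R}_{\mathcal I}\cap\mathcal C$.
\end{enumerate}
Item 3 is what gives the claimed equivalence with \eqref{eq:lasalle_condition}: on each closed cell, $\nabla V(x)^\top(\tilde A_{\mathcal I}x+\tilde b_{\mathcal I})$ coincides with $\dot V(x)=2x^\top Pf(x)$.

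Next, fix $x\in\mathcal C$ and set $g(s):=f(sx)$ for $s\in[0,1]$. Since $\mathcal C$ is convex and $0\in\mathcal C$, the whole segment $\{sx: s\in[0,1]\}$ lies in $\mathcal C$. Each closed cell is convex, so the segment meets it in a (possibly degenerate) interval. With finitely many cells, $[0,1]$ therefore splits into finitely many subintervals $[s_{k-1},s_k]$. On each subintervals, $sx\in\overline{\mathcal R}_{\mathcal I_k}$ and so $g(s)=s\tilde A_{\mathcal I_k}x+\tilde b_{\mathcal I_k}$. Two consequences follow:
\begin{enumerate}
\item $g$ is continuous and piecewise affine, hence absolutely continuous, with $g'(s)=\tilde A_{\mathcal I_k}x$ for almost every $s$ in the $k$-th piece.
\item Degenerate intervals, which come from lower-dimensional intersections of cells, have measure zero and do not contribute.
\end{enumerate}
Combining these with $g(0)=0$,
\[
f(x)=g(1)-g(0)=\sum_k (s_k-s_{k-1})\,\tilde A_{\mathcal I_k}x .
\]

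Finally,
\[
\dot V(x)=2x^\top Pf(x)=\sum_k (s_k-s_{k-1})\,x^\top\big(P\tilde A_{\mathcal I_k}+\tilde A_{\mathcal I_k}^\top P\big)x\le 0,
\]
by the hypothesis applied to each $\mathcal I_k\in\mathcal J$. Since $x\in\mathcal C$ was arbitrary, $\dot V\le0$ on $\mathcal C$.

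The step I expect to need the most care is the decomposition of the segment into finitely many affine pieces. It relies on the regions being finitely many convex polyhedra and on continuity of $f$ across cell boundaries. The continuity is also what makes the offsets $\tilde b_{\mathcal I_k}$ telescope away, leaving only the linear parts. Both facts are granted by the standing assumptions of Section~\ref{subsec:pwa_representation}.
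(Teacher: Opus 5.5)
Your proposal is correct and follows the paper's approach. The paper cites the line-segment argument of \cite[Lemma~2]{pavlov2005convergent} with rate zero to get $(x_1-x_2)^\top P\big(f(x_1)-f(x_2)\big)\le0$ on $\mathcal C$ and then sets $x_2=0$; you take the segment from $0$ to $x$ from the start and write out the telescoping of the offsets $\tilde b_{\mathcal I_k}$ explicitly.
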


\begin{proof}
Because the closed-loop dynamics \eqref{eq:pwa_dynamics} define a continuous piecewise-affine vector field on \(\mathcal C\), the line segment \([x_1,x_2]\subset\mathcal C\) intersects only finitely many switching hyperplanes. Repeating the line-segment argument of \cite[Lemma~2]{pavlov2005convergent} with \(\alpha=0\) yields
\[
(x_1-x_2)^\top P\big(\dot x_1-\dot x_2\big)\le0,
\qquad \forall x_1,x_2\in\mathcal C,
\]
where \(\dot x_j\) denotes the value of the closed-loop vector field at \(x_j\), \(j=1,2\). Taking \(x_2=0\) and using that the origin is an equilibrium gives \(x^\top P\dot x\le0\) for all \(x\in\mathcal C\). Hence
\[
\dot V(x)=2x^\top P\dot x\le0,
\qquad \forall x\in\mathcal C,
\]
which is \eqref{eq:lasalle_condition}.
\end{proof}

The condition \eqref{eq:lasalle_condition} can also be enforced directly on each region via
a tractable LMI relaxation that exploits the polyhedral description of $\overline{\mathcal R}_{\mathcal I}$.

For \(P=P^\top\in\mathbb R^{n\times n}\), define
\begin{equation}
Q_{\mathcal I}(P):=\tilde A_{\mathcal I}^\top P+P\tilde A_{\mathcal I},
\qquad
q_{\mathcal I}(P):=P\tilde b_{\mathcal I},
\label{eq:Qq_def}
\end{equation}
where \(\tilde A_{\mathcal I}\) and \(\tilde b_{\mathcal I}\) are given by
\eqref{eq:Atilde_I_explicit}--\eqref{eq:btilde_I_explicit}.

\begin{proposition}
\label{prop:lmi_lasalle}
Let \(V(x):=x^\top Px\) with \(P=P^\top\), and suppose \(\overline{\mathcal R}_{\mathcal I}\) is represented as in \eqref{eq:PI_Hrepr}. Let \(Q_{\mathcal I}(P)\) and \(q_{\mathcal I}(P)\) be defined by \eqref{eq:Qq_def}. If there exist \(P\succ0\) and \(\lambda_{\mathcal I}\ge0\) such that
\begin{equation}
\begin{bmatrix}
Q_{\mathcal I}(P) &
q_{\mathcal I}(P)-\tfrac12 H_{\mathcal I}^\top\lambda_{\mathcal I}\\[1mm]
\star &
\lambda_{\mathcal I}^\top h_{\mathcal I}
\end{bmatrix}
\preceq0,
\label{eq:LMI_lasalle_region}
\end{equation}
then \(\dot V(x)\le0\) for all \(x\in\overline{\mathcal R}_{\mathcal I}\). Consequently, if \eqref{eq:LMI_lasalle_region} holds for every \(\mathcal I\in\mathcal J\), then \eqref{eq:lasalle_condition} holds on \(\mathcal C\).

Moreover, if there exist \(P\succ0\), \(\varepsilon>0\), and multipliers \(\{\lambda_{\mathcal I}\ge0\}_{\mathcal I\in\mathcal J}\) such that
\begin{equation}
\begin{bmatrix}
Q_{\mathcal I}(P)+\varepsilon I &
q_{\mathcal I}(P)-\tfrac12 H_{\mathcal I}^\top\lambda_{\mathcal I}\\[1mm]
\star &
\lambda_{\mathcal I}^\top h_{\mathcal I}
\end{bmatrix}
\preceq0,
\qquad \forall \mathcal I\in\mathcal J,
\label{eq:strictlmi}
\end{equation}
then \(\dot V(x)\le-\varepsilon\|x\|^2\) on \(\mathcal C\), and the origin is GES in \(\mathcal C\) for the closed-loop dynamics \eqref{eq:pwa_dynamics}.
\end{proposition}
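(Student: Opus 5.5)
The proof is an S-procedure argument: test each LMI against the lifted vector $\xi:=\begin{bmatrix}x^\top & 1\end{bmatrix}^\top$ and use the sign information $H_{\mathcal I}x\le h_{\mathcal I}$ that holds on $\overline{\mathcal R}_{\mathcal I}$.

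\textbf{First claim.} Fix $\mathcal I\in\mathcal J$ and $x\in\overline{\mathcal R}_{\mathcal I}$. Pre- and post-multiplying \eqref{eq:LMI_lasalle_region} by $\xi^\top$ and $\xi$ gives
\[
x^\top Q_{\mathcal I}(P)x+2x^\top P\tilde b_{\mathcal I}-\lambda_{\mathcal I}^\top H_{\mathcal I}x+\lambda_{\mathcal I}^\top h_{\mathcal I}\le 0 .
\]
Since $V(x)=x^\top Px$, on this cell
\[
\dot V(x)=2x^\top P(\tilde A_{\mathcal I}x+\tilde b_{\mathcal I})=x^\top Q_{\mathcal I}(P)x+2x^\top q_{\mathcal I}(P).
\]
The inequality therefore reads $\dot V(x)\le \lambda_{\mathcal I}^\top(H_{\mathcal I}x-h_{\mathcal I})$. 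The right-hand side is nonpositive because $\lambda_{\mathcal I}\ge0$ and $H_{\mathcal I}x\le h_{\mathcal I}$ by \eqref{eq:PI_Hrepr}. This proves $\dot V\le0$ on $\overline{\mathcal R}_{\mathcal I}$. If the LMI holds for all $\mathcal I\in\mathcal J$, this is exactly \eqref{eq:lasalle_condition}.

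\textbf{Well-definedness at shared boundaries.} A point $x\in\mathcal C$ may lie in several closed cells. The closed-loop vector field is continuous and piecewise affine, so at such a point its value coincides with $\tilde A_{\mathcal I}x+\tilde b_{\mathcal I}$ for every cell $\overline{\mathcal R}_{\mathcal I}$ containing $x$. Hence the cellwise bound is a bound on the actual $\dot V(x)$. Since the cells cover $\mathcal C$, the bound holds on all of $\mathcal C$.

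\textbf{Strict version.} Repeating the computation with $Q_{\mathcal I}(P)+\varepsilon I$ in place of $Q_{\mathcal I}(P)$ gives
\[
\dot V(x)+\varepsilon\|x\|^2\le \lambda_{\mathcal I}^\top(H_{\mathcal I}x-h_{\mathcal I})\le0
\]
on each $\overline{\mathcal R}_{\mathcal I}$. Hence $\dot V(x)\le-\varepsilon\|x\|^2$ on $\mathcal C$, by the same continuity argument.

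\textbf{GES conclusion.} Let $x(\cdot)$ be a solution starting in $\mathcal C$. By the standing assumption $\mathcal C$ is forward invariant, so $x(t)\in\mathcal C$ for all $t\ge0$. The solution is $C^1$ because the vector field is continuous, so $t\mapsto V(x(t))$ is differentiable, and
\[
\tfrac{d}{dt}V(x(t))\le-\varepsilon\|x(t)\|^2\le-\tfrac{\varepsilon}{\lambda_{\max}(P)}V(x(t)).
\]
The comparison lemma gives $V(x(t))\le e^{-\varepsilon t/\lambda_{\max}(P)}V(x(0))$. Sandwiching $V$ between $\lambda_{\min}(P)\|x\|^2$ and $\lambda_{\max}(P)\|x\|^2$ then yields
\[
\|x(t)\|\le\sqrt{\lambda_{\max}(P)/\lambda_{\min}(P)}\;e^{-\varepsilon t/(2\lambda_{\max}(P))}\|x(0)\|,
\]
which is GES of the origin in $\mathcal C$.

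\textbf{Main obstacle.} The only delicate point is the behaviour at cell boundaries and switching instants. It is resolved by the continuity of the piecewise-affine vector field, so no Filippov or nonsmooth machinery is needed.
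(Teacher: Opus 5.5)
Your proposal is correct and follows essentially the same route as the paper. Like the paper, you evaluate the LMI at $\begin{bmatrix}x^\top & 1\end{bmatrix}^\top$, use $\lambda_{\mathcal I}\ge0$ and $H_{\mathcal I}x\le h_{\mathcal I}$ to discard the multiplier term, cover $\mathcal C$ by the closed cells, and sandwich $V$ between $\lambda_{\min}(P)\|x\|^2$ and $\lambda_{\max}(P)\|x\|^2$ to get exponential decay; your remarks on shared cell boundaries and the explicit comparison-lemma constants only make explicit what the paper leaves implicit.
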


\begin{proof}
Fix \(\mathcal I\in\mathcal J\), and define
\[
M_{\mathcal I}:=
\begin{bmatrix}
Q_{\mathcal I}(P) &
q_{\mathcal I}(P)-\tfrac12 H_{\mathcal I}^\top\lambda_{\mathcal I}\\
\star & \lambda_{\mathcal I}^\top h_{\mathcal I}
\end{bmatrix}.
\]
If \(M_{\mathcal I}\preceq0\), then for every \(x\in\mathbb R^n\),
\[
\begin{bmatrix}x\\1\end{bmatrix}^\top
M_{\mathcal I}
\begin{bmatrix}x\\1\end{bmatrix}
=
x^\top Q_{\mathcal I}(P)x
+2q_{\mathcal I}(P)^\top x
+\lambda_{\mathcal I}^\top(h_{\mathcal I}-H_{\mathcal I}x)
\le0.
\]
For \(x\in\overline{\mathcal R}_{\mathcal I}\), represented as in \eqref{eq:PI_Hrepr}, we have \(h_{\mathcal I}-H_{\mathcal I}x\ge0\). Since \(\lambda_{\mathcal I}\ge0\), the last term is nonnegative, and therefore
\[
x^\top Q_{\mathcal I}(P)x+2q_{\mathcal I}(P)^\top x\le0,
\qquad
\forall x\in\overline{\mathcal R}_{\mathcal I}.
\]
Using \eqref{eq:Qq_def}, this is exactly \(\dot V(x)\le0\) on \(\overline{\mathcal R}_{\mathcal I}\). If \eqref{eq:LMI_lasalle_region} holds for every \(\mathcal I\in\mathcal J\), then \(\dot V(x)\le0\) on \(\mathcal C\), since
\[
\mathcal C\subseteq \bigcup_{\mathcal I\in\mathcal J}\overline{\mathcal R}_{\mathcal I}.
\]

For the exponential claim, replace \(Q_{\mathcal I}(P)\) by \(Q_{\mathcal I}(P)+\varepsilon I\) in \(M_{\mathcal I}\). The same argument gives \(\dot V(x)\le-\varepsilon\|x\|^2\) on \(\mathcal C\). Since
\[
\lambda_{\min}(P)\|x\|^2\le V(x)\le\lambda_{\max}(P)\|x\|^2,
\]
it follows that
\[
\dot V(x)\le-(\varepsilon/\lambda_{\max}(P))V(x),
\]
hence \(V\) and \(\|x\|\) decay exponentially. Therefore, the origin is GES in \(\mathcal C\).
\end{proof}

Condition \eqref{eq:LMI_lasalle_region} is affine in the decision variables \((P,\lambda_{\mathcal I})\) and yields a tractable LMI-based sufficient condition for \eqref{eq:lasalle_condition}. Although conservative relative to the exact region-wise inequality, it is typically less restrictive than strict CQLF conditions because it exploits both the polyhedral geometry of the active-set regions and the affine offsets \(\tilde b_{\mathcal I}\).

\section{Illustrative Examples and Discussion}



We begin with an example in which the global CQLF test
\eqref{eq:cqlf_strict} succeeds, yielding GES via
Thm.~\ref{thm:ges_cqlf}.

\begin{example}
\label{ex:cqlf_works_nondiagonal_nonsymmetric}
Consider \eqref{eq:system} with \(K=0\), \(G=I_2\), and \(\alpha=1\), where
\begin{equation*}
A=
\begin{bmatrix}
-1.5421 & 1.5125\\
-0.2300 & -0.8362
\end{bmatrix},
\qquad
B=I_2.
\end{equation*}
Let the constraints be \(h_1(x)=x_1+1\ge0\) and \(h_2(x)=x_2+1\ge0\), so that
\[
\mathcal C=\{x\in\mathbb R^2:\ x_1\ge -1,\ x_2\ge -1\}
\]
is unbounded. The resulting safety filter induces a continuous PWA closed-loop system with four active-set modes, whose linear parts are
\begin{equation}
\begin{aligned}
\tilde A_{\emptyset} &= A,
&
\tilde A_{\{1\}} &=
\begin{bmatrix}
-1 & 0\\
-0.2300 & -0.8362
\end{bmatrix},
\\[1ex]
\tilde A_{\{1,2\}} &=-I_2, &
\tilde A_{\{2\}} &=
\begin{bmatrix}
-1.5421 & 1.5125\\
0 & -1
\end{bmatrix}.
\end{aligned}
\end{equation}
Solving the strict CQLF LMIs in \eqref{eq:cqlf_strict} yields a feasible matrix \(P\succ0\) and margin \(\epsilon>0\). It then follows from Thm.~\ref{thm:ges_cqlf} that the origin is GES in \(\mathcal C\), as illustrated in Fig.~\ref{fig:cqlf}.
\end{example}

The next example shows that, even when \eqref{eq:cqlf_strict} is infeasible, the region-wise LMI \eqref{eq:LMI_lasalle_region} in Prop.~\ref{prop:lmi_lasalle} can still certify global asymptotic stability of the origin.

\begin{example}
\label{ex:lasalle_not_cqlf_concise}
Consider \eqref{eq:system} with \(K=0\), \(G=I_2\), and \(\alpha=1\), where
\[
A=
\begin{bmatrix}
-1 & 2\\
-2 & -1
\end{bmatrix},
\qquad
B=I_2,
\]
and constraints
\(
h_1(x)=x_1+1\ge0,
\;
h_2(x)=x_2+1\ge0.
\)
The safety filter induces a continuous piecewise-affine closed loop with four active-set modes.

The strict CQLF condition \eqref{eq:cqlf_strict} in Thm.~\ref{thm:ges_cqlf} is infeasible. Indeed, imposing it on the modes \(\mathcal I=\{1\}\) and \(\mathcal I=\{2\}\) yields the contradictory conditions \(c>a\) and \(a>c\) for
\[
P=
\begin{bmatrix}
a & b\\
b & c
\end{bmatrix}\succ0.
\]

By contrast, the region-wise LMI \eqref{eq:LMI_lasalle_region} in Prop.~\ref{prop:lmi_lasalle} is feasible with the common choice \(P=I_2\) and suitable multipliers \(\lambda_{\mathcal I}\ge0\), \(\mathcal I\in\{\emptyset,\{1\},\{2\},\{1,2\}\}\). Hence \eqref{eq:lasalle_condition} holds on \(\mathcal C\) for \(V(x)=x^\top x\). By Thm.~\ref{thm:boundedness_lasalle}, every trajectory starting in \(\mathcal C\) is bounded and converges to the largest invariant subset of \eqref{eq:EV_set}, as illustrated in Fig.~\ref{fig:lassal}.
Moreover, a case-by-case analysis of the four active sets shows that
\(
\dot V(x) < 0,\)
 for all $x \in \mathcal C \setminus \{0\}$.
Thus \(\mathcal E_V=\{0\}\), and the origin is globally asymptotically stable in \(\mathcal C\). 
\end{example}

\begin{figure}[t]
    \centering
    \includegraphics[width=.9\linewidth]{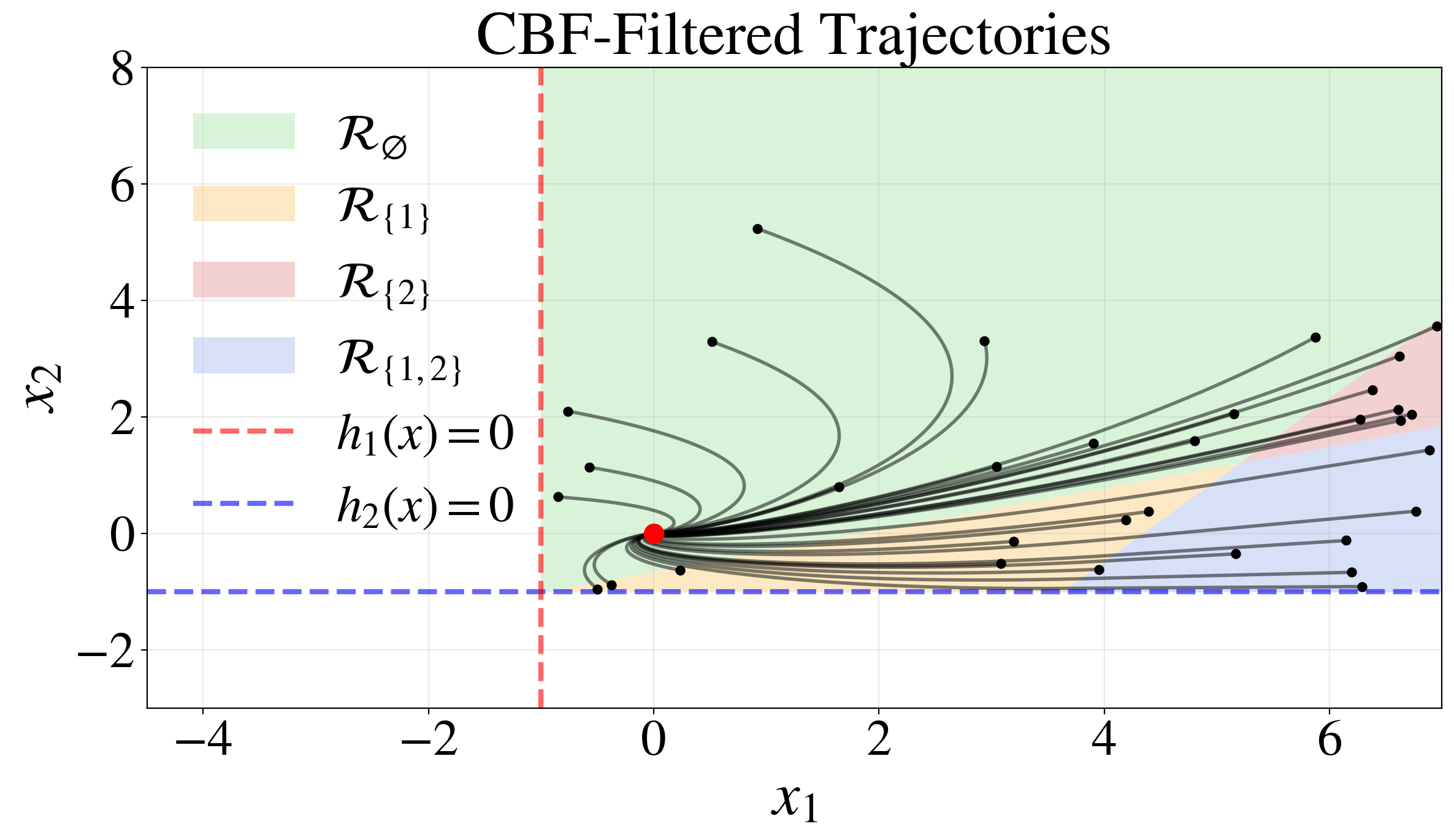}
    \caption{Closed-loop trajectories for
Example~\ref{ex:cqlf_works_nondiagonal_nonsymmetric}.} 
    \label{fig:cqlf}
\end{figure}

\begin{figure}[t]
    \centering
    \includegraphics[width=.9\linewidth]{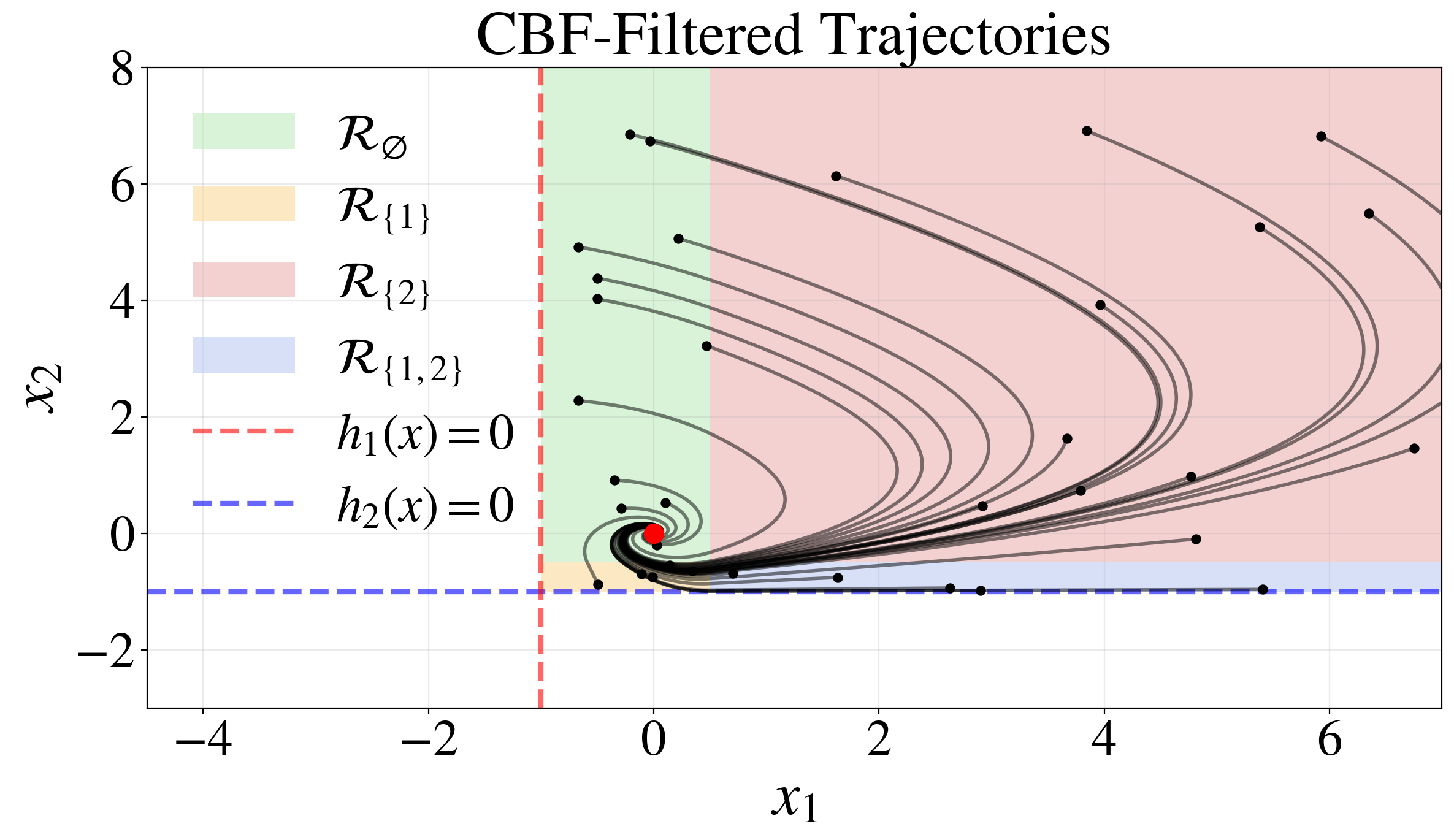}
    \caption{Closed-loop trajectories for
Example~\ref{ex:lasalle_not_cqlf_concise}.} 
    \label{fig:lassal}
\end{figure}

\section{Conclusions}

This paper studied multi-constraint CBF-QP safety filters for linear systems with affine state constraints. We showed that nonempty active sets can create boundary equilibria, that under a fixed active set instability can only develop tangentially to the active faces, and that divergence occurs only under an additional recession-cone condition. We also derived tractable minimum-phase, Lyapunov, and LMI-based tests for global exponential stability, boundedness, and asymptotic convergence. Future work will focus on less conservative certificates based on piecewise quadratic Lyapunov functions adapted to the polyhedral active-set structure.



\bibliography{references}

\bibliographystyle{IEEEtran}

\end{document}